\newtheorem{theorem}{Theorem}
\newtheorem{corollary}{Corollary}
\newtheorem{lemma}{Lemma}
\newtheorem{definition}{Definition}
\begin{document}

\title{Online Non-preemptive Scheduling on Unrelated Machines with Rejections}

\author[1]{Giorgio Lucarelli \thanks{giorgio.lucarelli@imag.fr}}
\author[2]{Benjamin Moseley \thanks{moseleyb@andrew.cmu.edu}}
\author[3]{Nguyen Kim Thang \thanks{thang@ibisc.fr}}
\author[4,5]{Abhinav Srivastav \thanks{abhinav.srivastav@ens.fr}}
\author[1]{Denis Trystram \thanks{trystram@imag.fr\\ \\}}

\affil[1]{Universit\'e GrenobleAlpes, CNRS, INRIA, Grenoble INP, LIG\\}
\affil[2]{Carnegie Mellon University\\}
\affil[3]{IBISC, Univ \'Evry, Universit\'e Paris-Saclay\\}
\affil[4]{D\'epartment d'Informatique, ENS-Paris\\}
\affil[5]{LAMSADE, Universit\'e Paris Dauphine}


\maketitle

\begin{abstract}
When a computer system schedules jobs there is typically a significant cost associated with preempting a job during execution. This cost can be from the expensive task of saving the memory's state and loading data into and out of memory. It is desirable to schedule jobs non-preemptively to avoid the costs of preemption.  

There is a need for non-preemptive system schedulers on desktops, servers and data centers. Despite this need, there is a gap between theory and practice.  Indeed, few non-preemptive \emph{online} schedulers are known to have strong foundational guarantees.  This gap is likely due to strong lower bounds on any online algorithm for popular objectives. Indeed, typical worst case analysis approaches, and even resource augmented approaches such as speed augmentation, result in all algorithms having poor performance guarantees. 

This paper considers on-line non-preemptive scheduling problems in the worst-case rejection model where the algorithm is allowed to reject a small fraction of jobs.  By rejecting only a few jobs, this paper shows that the strong lower bounds can be circumvented.  This approach can be used to discover algorithmic scheduling policies with desirable worst-case guarantees.

Specifically, the paper presents algorithms for the following two objectives: minimizing the total flow-time and minimizing the total weighted flow-time plus energy under the speed-scaling mechanism. The algorithms have a small constant competitive ratio while rejecting only a constant fraction of jobs.   

Beyond specific results, the paper asserts that alternative models beyond speed augmentation should be explored to aid in the discovery of good schedulers in the face of the requirement of being online and non-preemptive.  

\end{abstract}

\section{Introduction}
\label{sec:intro}

Designing efficient system schedulers is critical for optimizing system performance.  Many environments require the scheduler to be \emph{non-preemptive}, ensuring each job is scheduled on a machine without interruption.  The need for non-preemption arises because preemption requires saving the state of a program and writing the state to memory or disk.  For large complex tasks, the overhead cost of saving state is so large that it has to be avoided entirely.  

Designing theoretically efficient \emph{online}  non-preemptive schedulers is challenging.  Strong lower bounds have been shown, even for simple instances~\cite{KellererTW99,ChekuriKhanna01:Algorithms-for-minimizing}.  
The difficulty lies in the pessimism of assuming the algorithm is online and must be robust to all problem instances combined with irrevocable nature of scheduling a non-preemptive jobs.


In order to overcome strong theoretical barriers when designing scheduling algorithms, \cite{KalyanasundaramPruhs00:Speed-is-as-powerful} and \cite{PhillipsStein02:Optimal-time-critical} proposed using resource augmentation in terms of \emph{speed augmentation} and the \emph{machine augmentation}, respectively.  The idea is to either give the algorithm faster processors or extra machines versus the adversary.   These models provide a tool to establish a theoretical explanation for the good performance of  algorithms in practice.
Indeed, many practical heuristics have been shown to be competitive in the on-line preemptive model where the algorithm is given resource augmentation.

Non-preemptive environments have resisted the discovery of strong theoretical schedulers.  Specifically, it is known that a non-preemptive algorithm cannot have a small reasonable competitive ratio using only speed or machine augmentation~\cite{LucarelliThang16:Online-Non-preemptive} for the popular average flow time objective. 

Recently, \cite{ChoudhuryDas15:Rejecting-jobs} extended the resource augmentation model to allow \emph{rejection}.  That is, some jobs need not be completed and are rejected. By combining rejection and speed augmentation, \cite{LucarelliThang16:Online-Non-preemptive} gave competitive algorithms for  non-preemptive flow-time problems. An intriguing question is the power of rejection versus resource augmentation.  Is there  a competitive algorithm that only uses rejection? This would establish that theoretically rejection is more powerful since there are lower bounds using resource augmentation.  This paper answers  this question positively. 


\subsection{Models, Problems and Contribution}

\paragraph{Non-Preemptive Total Flow-time Minimization}
In this problem, we are given a set of unrelated machines $\mathcal{M}$ and jobs arrive on-line.
Each job $j \in \mathcal{J}$ is characterized by a \emph{release time} $r_j$ and it takes a different \emph{processing time} $p_{ij}$ if it is executed on each machine $i \in \mathcal{M}$.
The characteristics of each job become known to the algorithm only after its arrival.
The jobs should be scheduled \emph{non-preemptively}, that is a job is considered to be successfully executed only if it is executed on a machine $i \in \mathcal{M}$ for $p_{ij}$ continuous time units.
Given a schedule $\mathcal{S}$, the \emph{completion time} of a job $j \in \mathcal{J}$ is denoted by $C_j$.
Then, its \emph{flow-time} is defined as $F_j=C_j-r_j$, that is the total amount of time during which $j$ remains in the system.
Our goal is to create a non-preemptive schedule that minimizes the total flow-times of all jobs, i.e., $\sum_{j} F_j$.

The problem has been studied in \cite{LucarelliThang16:Online-Non-preemptive} in the model of speed augmentation and rejection.
Specifically, \cite{LucarelliThang16:Online-Non-preemptive} gave a $O(1/(\epsilon_{r} \cdot \epsilon_{s}))$-competitive algorithm that uses machines with speed $(1+\epsilon_{s})$ and reject at most $\epsilon_{r}$-fraction of jobs for arbitrarily small $\epsilon_{r},\epsilon_{s} > 0$.
A natural intriguing question is whether speed augmentation is necessary. 
Our main result answers positively this question.

\begin{theorem} \label{thm:flow}
For the non-preemptive total flow-time minimization problem, 
there exists a $2 \bigl( \frac{1+\epsilon}{\epsilon} \bigr)^{2}$-competitive algorithm that removes at most $2\epsilon$ fraction of the total number of jobs, for any $\epsilon>0$.
\end{theorem}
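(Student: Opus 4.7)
The plan is to design an online immediate-dispatch algorithm: upon the arrival of each job $j \in \mathcal{J}$, we decide whether to reject $j$ or irrevocably assign it to some machine $i \in \mathcal{M}$. Each machine then runs a local non-preemptive policy (e.g.\ shortest processing time first among its currently pending jobs) whenever it becomes idle. The freedom to reject up to a $2\epsilon$-fraction of jobs plays the role that speed augmentation played in prior work of \cite{LucarelliThang16:Online-Non-preemptive}: it lets the algorithm shed load whenever a machine's queue grows ``too long'' relative to its intrinsic scale.

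For the dispatch rule, I would attach to each arriving job $j$ a marginal cost $\Delta_{ij}$ on each machine $i$, defined as the estimated increase in total flow-time on $i$ caused by inserting $j$ into its pending queue. This accounts both for $j$'s own flow-time and for the extra delay $j$ inflicts on jobs already committed to $i$. The algorithm assigns $j$ to the machine $i^{*}=\arg\min_{i}\Delta_{ij}$ unless $\Delta_{i^{*}j}$ exceeds a threshold scaling with $(1+\epsilon)/\epsilon$ relative to $p_{i^{*}j}$ times the current queue-length, in which case $j$ is rejected.

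For the analysis, I would use a dual-fitting argument. Restricted to the set of accepted jobs, I would set up the standard unrelated-machines flow-time LP and exhibit dual variables proportional to $(1+\epsilon)/\epsilon$ times quantities produced by the algorithm. Dual feasibility would follow from the greedy dispatch rule, and comparing primal and dual objectives would give the ratio $2\bigl((1+\epsilon)/\epsilon\bigr)^{2}$, with the extra factor of $2$ absorbing lower-order terms and the mismatch between the algorithm's flow-time (summed over accepted jobs) and OPT's flow-time (summed over all jobs). To cap the rejected fraction, I would sum the threshold inequalities over all rejected jobs: the right-hand side telescopes into a quantity bounded by a constant times the algorithm's accepted flow-time (hence by OPT), and a Markov-type inequality then bounds the rejected count by $2\epsilon|\mathcal{J}|$.

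The hard part will be handling the non-preemptive constraint together with the absence of speed augmentation: whenever a long job is running, its machine is blocked and cannot serve any newly arriving short job, and without extra speed this blocking cannot be simply absorbed by running faster. I expect the key technical step to be an amortization argument showing that every long blocking job $j$ that is not rejected is also ``expensive'' in OPT: OPT must also process $j$ non-preemptively on some machine, and during the interval of length $p_{ij}$ the corresponding OPT-machine contributes comparably to the flow-time objective. Tuning the dispatch threshold, the local SPT rule, and the rejection cap so that these contributions align is precisely where the constants $(1+\epsilon)/\epsilon$ and the factor $2$ in Theorem~\ref{thm:flow} come from.
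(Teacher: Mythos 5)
Your overall dual-fitting skeleton (immediate dispatch by minimizing a marginal-increase quantity, dual variables scaled by $\frac{\epsilon}{1+\epsilon}$, the factor $2$ coming from the LP being only a $2$-relaxation of the non-preemptive optimum) matches the paper. But your rejection policy is fundamentally the wrong one, and this is not a detail that can be patched within your framework. You propose to ``decide whether to reject $j$ or irrevocably assign it'' at the moment $j$ arrives, i.e.\ the only job ever eligible for rejection is the one that just arrived. The paper proves (Lemma~\ref{lem:LB1}) that \emph{any} such immediate-rejection policy has competitive ratio $\Omega(\sqrt{\Delta})$ already on a single machine: release $1/\epsilon$ jobs of length $L$ at time $0$; the algorithm must accept almost all of them and eventually start one at some time $t<L^2$ (or lose by waiting); once that job is running it cannot be rejected, and a stream of $\Theta(L^2)$ jobs of size $1/L$ released during its execution is forced to wait $\Theta(L)$ each, giving cost $\Omega(L^3)$ against an optimum of $\Theta(L^2)$. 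Your proposed escape --- an amortization showing that a long blocking job is ``also expensive in OPT'' --- fails on exactly this instance: OPT simply runs the long jobs \emph{after} all the short ones, so it pays nothing comparable during the blocking interval.

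The missing idea is that rejection decisions must be \emph{deferred} and must target jobs other than the one arriving. The paper's Rule~1 rejects the \emph{currently running} job $k$ once $1/\epsilon$ later jobs have been dispatched to its machine during $k$'s execution (this is what neutralizes the blocking instance above), and Rule~2 rejects the \emph{largest pending} job each time $1+\frac{1}{\epsilon}$ new jobs have been dispatched to the machine (this substitutes for speed augmentation by keeping the queue length $|U_i(t)|$ bounded by roughly $\frac{1}{\epsilon}(|R_i(t)|+1)$, which is what makes the dual constraint feasible without any speedup). With these counter-based rules the $2\epsilon$ bound on the rejected fraction is immediate by a charging argument --- each rejection is paid for by $\Omega(1/\epsilon)$ distinct arrivals --- rather than by the Markov-type inequality against OPT that you sketch, which would bound the number of rejections by a cost quantity rather than by a fraction of $|\mathcal{J}|$. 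A further ingredient you would need, and which your sketch omits, is how to keep the dual variables $\beta_i(t)$ monotone under these future rejections: the paper does this by counting a rejected job as ``not yet definitively finished'' for an extended window $\tilde{C}_j$, so that rejections never decrease the dual objective already accounted for.
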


The design and analysis of the algorithm follow the duality approach.
At the release time of any job $j$, the algorithm defines the dual variables associated to the job and assigns $j$ to some machine based on this definition.
The value of the dual variables associated to $j$ are selected in order to satisfy two key properties:
(i) comprise the marginal increase of the total weighted flow-time due to the arrival of the job --- the property that has been observed \cite{AnandGarg12:Resource-augmentation} and has become more and more popular in dual-fitting for on-line scheduling;
and (ii) capture the information for a future decision of the algorithm whether job $j$ will be completed or rejected.
Moreover, the dual variables are defined so as to stabilize the schedule and allows us to maintain a non-preemptive schedule (even with job arrivals and rejections in the future).

The decision about rejecting a job depends on the load of the recently released jobs that are waiting in the queue of each machine.
The scheduler rejects a job when this load exceeds a given threshold, while the rejected job is not necessarily the one that just arrived and caused the excess in the threshold.
The following lemma, whose proof is given in the Appendix, shows that immediate rejection policies cannot improve the competitive ratio.
\begin{lemma}\label{lem:LB1}
Any $\epsilon$-rejection policy which has to decide the rejection or not of each job immediately upon its arrival, has a competitive ratio of $\Omega(\sqrt{\Delta})$ for the non-preemptive total flow-time minimization problem even on a single machine environment, where $\Delta$ is the ratio of the maximum over the minimum processing time in the instance and $\epsilon > 0$.
\end{lemma}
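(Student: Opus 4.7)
The plan is to construct an adaptive single-machine instance on which any $\epsilon$-rejection policy that must decide accept/reject at each arrival has competitive ratio $\Omega(\sqrt{\Delta})$. The key idea is to front-load a batch of long jobs: at the moment their decisions must be taken, the algorithm has only seen these jobs, so its running rejection fraction can be at most $\epsilon$ times the currently visible count; after committing to the long jobs, the adversary reveals a stream of short jobs whose presence, had it been known in advance, would have allowed an offline optimum to reject every long job within the $\epsilon$-budget.

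Concretely, I would have the adversary release at time $t=0$ a batch of $N = \lceil 1/\epsilon\rceil$ long jobs of processing time $P = \Delta$ (and nothing else). Since only $N$ jobs are visible at that instant, the running rejection budget permits at most $\lfloor \epsilon N\rfloor \le 1$ of them to be rejected, so the algorithm accepts a set $A$ of size $a = |A| \ge N-1 = \Omega(1/\epsilon)$. After observing these decisions, the adversary releases $m = \lceil N(1-\epsilon)/\epsilon\rceil = \Theta(1/\epsilon^{2})$ unit-size jobs at times $1,2,\ldots,m$; this $m$ is chosen minimal so that $N \le \epsilon(N+m)$, making it feasible for the offline optimum to reject every long job. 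The core analytic step is the classical single-machine inequality $\sum_{i=1}^a C_i \ge P\cdot a(a+1)/2$, which holds for any non-preemptive schedule of $a$ equal-length jobs of processing time $P$ released at time $0$, regardless of how the short arrivals are interleaved. This yields algorithm cost $\Omega(a^{2}P) = \Omega(\Delta/\epsilon^{2})$, while offline OPT rejects every long job and processes each short arrival on release, paying only $m = O(1/\epsilon^{2})$. Dividing the two bounds gives a competitive ratio of $\Omega(\Delta)$, in particular $\Omega(\sqrt{\Delta})$.

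The main obstacle I anticipate is to show that no online strategy escapes either the inequality $\sum_{i=1}^a C_i \ge P\cdot a(a+1)/2$ or the lower bound $a = \Omega(1/\epsilon)$ coming from the running budget. Two evasions deserve explicit treatment. First, deferring the long jobs behind the short stream does not help: a short exchange argument shows that the completion-time inequality holds for every non-preemptive ordering and deferral only inflates each $C_i$. Second and more delicately, the algorithm might accept all $N$ long jobs at $t=0$ intending to reject many short jobs later (once their arrivals enlarge the rejection budget); to rule this out I would observe that rejecting short jobs affects neither the $\Omega(a^{2}P)$ contribution of the already-accepted long jobs nor OPT's $\Theta(m)$ baseline by more than a constant factor, so the ratio is unaffected. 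Making these two arguments precise---in particular, tying the running-budget inequality to the adaptive adversary's choice of $m$---constitutes the main technical work of the proof.
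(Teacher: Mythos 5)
There is a genuine gap: your lower bound is computed against an offline optimum that is itself allowed to reject an $\epsilon$-fraction of the jobs, whereas the competitive ratio in this paper (in Theorem~\ref{thm:flow} and in the paper's own proof of this lemma) is measured against the optimal schedule that processes \emph{all} jobs --- rejection is a power granted to the online algorithm only. Against that benchmark your construction collapses: the no-rejection optimum must also schedule the $N=\lceil 1/\epsilon\rceil$ long jobs and therefore itself pays at least $P\cdot N(N+1)/2=\Omega(\Delta/\epsilon^{2})$, which is exactly the quantity you charge to the algorithm. Since your unit jobs are released on a fixed schedule at times $1,2,\ldots,m$, an online algorithm that idles at time $0$, runs each unit job upon arrival, and only afterwards runs the long jobs has cost within a constant factor of the optimum on your instance, so the ratio is $O(1)$, not $\Omega(\Delta)$. (That you obtain $\Omega(\Delta)$ rather than the claimed $\Omega(\sqrt{\Delta})$ is itself a symptom of the benchmark mismatch.)

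The missing idea is that the adversary must force the algorithm to pay for something the all-jobs optimum does \emph{not} pay for, and the only available lever is the irrevocability of starting an accepted long job non-preemptively. The paper's construction releases $1/\epsilon$ jobs of length $L$ at time $0$ (so at most one can be rejected immediately) and then watches \emph{when} the algorithm first starts one of them, at some time $t$. If $t>L^{2}$, the instance ends and the algorithm already loses a factor $L$ purely from waiting. If $t<L^{2}$, the adversary releases $\Theta(L^{2})$ jobs of length $1/L$ during $[t,t+L]$; these are trapped behind the running long job, which was accepted on arrival and cannot be preempted or retroactively rejected, costing the algorithm $\Omega(L^{3})$, while the optimum, knowing the future, runs the small jobs first and pays only $\Theta(L^{2})$ in total including the big jobs. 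With $\Delta=L^{2}$ this yields $\Omega(\sqrt{\Delta})$. Your fixed release pattern provides no such trap, and your completion-time and budget arguments, while individually correct, only bound a cost that the benchmark shares.
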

\begin{proof}
Assume that $1/\epsilon$ jobs of length $L$ are released at time $0$.
Note that the algorithm can reject at most one of them.
Consider the time $t$ where the algorithm schedules the first of these jobs. 
\begin{itemize}[leftmargin=*,topsep=5pt]
\item If $t > L^2$, then the algorithm was waited too long.
Specifically, the solution of the algorithm has a total flow time of at least $(1/\epsilon)L^2 + \sum_{j=1}^{1/\epsilon} j \cdot L = \Theta(L^2)$.
On the other hand, the adversary schedules the jobs sequentially in an arbitrary order starting from time $0$.
Hence, the total flow time in adversary's schedule is equal to $\sum_{j=1}^{1/\epsilon} j \cdot L = \Theta(L)$.
Thus, the competitive ratio in this case is $\Omega(L)$.
\item If $t < L^2$, then starting at time $t$ a job of processing time $1/L$ is released every $1/L$ time until $t+L$.
Thus, there are $\Theta(L^2)$ such small jobs released.
By the definition of the model, the algorithm cannot reject the job which is scheduled at time $t$, and hence the small jobs have to wait until this job is completed at time $t+L$.
Since the algorithm can only reject a constant fraction of the small jobs, it will have a total flow time of $\Omega(L^3)$.
On the other hand, the adversary schedules all small jobs before all big jobs of processing time $L$.
Hence, the total flow time for the small jobs is $\Theta(L^2)$, while for the big jobs the total flow time is $(1/\epsilon)(t+L)+\sum_{j=1}^{1/\epsilon} j \cdot L = \Theta(L^2)$, since $t \leq L^2$.
Thus, the competitive ratio is again $\Omega(L)$.
\end{itemize}
The lemma follows from the fact that $\Delta=L^2$.
\end{proof}

\paragraph{Non-Preemptive Total Flow-time Plus Energy Minimization}
We next consider non-preemptive scheduling in the speed scaling model. 
In this model, each machine $i \in \mathcal{M}$ has a power function of the form $P(s_{i}(t)) = s_{i}(t)^\alpha$, where $s_{i}(t)$ is the speed of the machine $i$ at time $t$ and $\alpha>1$ is a constant parameter (usually $\alpha \in (1,3]$).
Each job $j \in \mathcal{J}$ is now characterized by its \emph{weight} $w_{j}$, its \emph{release date} $r_j$ and, for each machine $i \in \mathcal{M}$, a machine-dependent \emph{volume} of execution $p_{ij}$. 
A \emph{non-preemptive} schedule in the speed-scaling model is a schedule in which each job is processed continuously (without being interrupted) in a machine and a job has a constant speed during its execution.
Note that in the model, it is allowed to process multiple jobs in parallel on the same machine. 
The objective is to schedule jobs non-preemptively so that minimizing the total weighted flow-time plus the energy consumed for all jobs, i.e. $\sum_{j} w_j F_j + \sum_{i} \int_{0}^{\infty} \big(s_i(t)\big)^{\alpha} dt$.

Building upon the resilient ideas and techniques from flow-time minimization, we  derive a competitive algorithm for the problem.
Note that this algorithm does not need to process multiple jobs in parallel on the same machine, although this is permissible by the described model.

\begin{theorem}\label{thm:flow+energy}
For the non-preemptive total weighted flow-time plus energy minimization problem, there exists an O$\left(\left(1+ \frac{1}{\epsilon}\right)^{\frac{\alpha}{\alpha-1}}\right)$-competitive algorithm that rejects jobs of total weight at most an $\epsilon$-fraction of the total weight of all jobs, for any $\epsilon > 0$.
\end{theorem}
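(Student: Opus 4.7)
The plan is to extend the dual-fitting framework used for Theorem~\ref{thm:flow} to handle the additional energy term and the now variable speeds. First I would write a time-indexed convex relaxation of the non-preemptive weighted flow-time plus energy problem, where for each machine $i$ and time $t$ a variable $x_{ij}(t)$ represents the rate at which job $j$ is processed on $i$, together with a completion constraint $\sum_i \int_{r_j}^{\infty} x_{ij}(t)/p_{ij}\, dt \geq 1$ for every non-rejected job and a power cost $\sum_i \int \bigl(\sum_j x_{ij}(t)\bigr)^{\alpha} dt$ in the objective. Standard manipulations yield a Lagrangian dual with variables $\lambda_j$ (one per job) and $\gamma_i(t)$ (one per machine-time), generalizing the dual already employed in the proof of Theorem~\ref{thm:flow}.

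Next, upon release of job $j$, the algorithm would assign $j$ to the machine $i^\star$ that minimizes a marginal-cost expression combining (a) the increase in the weighted flow-time that $j$ inflicts on the jobs already queued, (b) its own weighted flow-time contribution, and (c) the extra energy needed to process $j$ at the chosen speed. The speed at which each job is executed is selected by balancing the energy term $s^{\alpha}$ against the marginal weighted flow-time contribution: equating derivatives yields $s$ proportional to $W^{1/(\alpha-1)}$, where $W$ is the weight of the jobs currently pending on the machine, and this is exactly the trade-off that produces the $\alpha/(\alpha-1)$ exponent in the competitive ratio. The dual variables $\lambda_j$ and $\gamma_i(t)$ would then be defined on the basis of this schedule so that the dual objective equals, up to constant factors, the algorithm's flow-time plus energy cost, replicating property~(i) from the proof of Theorem~\ref{thm:flow}, while at the same time storing the information needed to decide later which jobs to reject, replicating property~(ii).

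The rejection rule would mirror that of Theorem~\ref{thm:flow} but in weighted form: each machine maintains the total weight of recently released jobs still in its queue and, whenever this quantity exceeds an $\epsilon$-dependent threshold, rejects a pending job. A charging argument analogous to the flow-time case then bounds the total rejected weight by an $\epsilon$-fraction of the total arriving weight. Because the rejections only decrease the primal cost but may invalidate some dual constraints, the dual variables must be scaled by $O\bigl((1+1/\epsilon)^{\alpha/(\alpha-1)}\bigr)$ so that every dual constraint still holds; proving this feasibility is the crux of the argument.

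The hard part will be establishing the dual constraint at every machine-time pair $(i,t)$: one has to show that the sum over jobs of their contribution to $\gamma_i(t)$ is dominated by $s_i(t)^{\alpha}$ up to the claimed factor. This reduces to a convexity argument comparing the algorithm's chosen speed, driven by the pending weight, to the alternative speeds that would be dictated by any dual-feasible schedule, combined with a Young-type inequality that turns a linear term in weight into the sum of an $s^{\alpha}$ term and a term controlled by the $\alpha/(\alpha-1)$-norm of pending weight. The rejection policy is essential here, because without it the backlog on a machine can be unboundedly large and no bound of the stated form can hold. Once feasibility is established, weak duality together with the primal-dual accounting yields the claimed competitive ratio.
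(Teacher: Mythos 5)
Your overall architecture matches the paper's: a time-indexed convex relaxation whose energy term is linearized so that the machine--time quantities become dual variables, dispatch by minimizing a marginal weighted-flow-plus-energy quantity, a speed driven by the pending weight, dual variables set to the (scaled) marginal increase, and feasibility established through an integral/Young-type bound on $\sum_{\ell} w_\ell / W_\ell^{1/\alpha}$. However, there are two concrete gaps. First, your rejection rule is the wrong one for this problem. You propose that a machine reject \emph{a pending job} whenever the total weight of recently released jobs in its queue exceeds an $\epsilon$-dependent threshold; this is the analogue of Rule~2 from the flow-time section, which simulates speed augmentation. In the speed-scaling setting that role is already played by the speed itself (the machine accelerates as pending weight grows), and the obstruction that rejection must remove is the non-preemptive blocking by the job $k$ currently running at a fixed speed: the term $w_j\, q_{ik}(r_j)/s_k$ in the marginal increase is unbounded relative to OPT unless $k$ itself can be interrupted and rejected (this is exactly the bad instance of Lemma~\ref{lem:LB1}). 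The paper therefore attaches a counter $v_k$ to the \emph{running} job, increments it by $w_\ell$ for each job $\ell$ dispatched during $k$'s execution, and rejects $k$ when $v_k > w_k/\epsilon$; the threshold being relative to $w_k$ is also what makes the charging argument work, since each rejected $k$ satisfies $w_k < \epsilon v_k$ and each job's weight is charged to at most one counter, giving total rejected weight at most $\epsilon \sum_j w_j$. A fixed machine-level threshold that triggers rejection of some other pending job gives neither the unblocking nor the weight bound.

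Second, your speed choice $s \propto W^{1/(\alpha-1)}$ is incorrect; equating the derivative of the instantaneous cost rate (weighted flow accrues at rate $W$, energy at rate $s^{\alpha}$, per unit of work the cost is $W/s + s^{\alpha-1}$) gives $s \propto W^{1/\alpha}$, which is what the paper uses ($s = \gamma(\sum_{\ell \in U_i(t)} w_\ell)^{1/\alpha}$). With your exponent the power consumed is $W^{\alpha/(\alpha-1)}$, superlinear in the pending weight, and the energy can no longer be charged against the fractional weighted flow time, so no bound of the stated form follows. (The $\alpha/(\alpha-1)$ exponent on $(1+1/\epsilon)$ in the ratio does not come from the speed rule; it arises from relating $u_i(t)^{\alpha}$ back to the fractional pending weight $V_i(t)$ when lower-bounding the dual objective.) A smaller omission: since the dual constraint must hold for all $t \geq r_j$ while $\lambda_j$ is fixed at release, you also need a monotonicity statement for the pending fractional weight (the paper's Lemma~\ref{lemma:monotonicity}) to reduce feasibility to the single time $r_j$; your plan does not account for this.
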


\paragraph{Non-Preemptive Energy Minimization}
Subsequently, we consider the non-preemptive energy minimization scheduling problem in the speed scaling model.
The setting is similar to the previous problem but a job $j \in \mathcal{J}$ now has a release date $r_{j}$, a deadline $d_{j}$ and a processing volume $p_{ij}$ if it is assigned to machine $i \in \mathcal{M}$.
Every job has to be processed non-preemptively and to be completed before its deadline. 
The goal is to minimize the total energy consumption $\sum_{i} \sum_{t} P_{i}\big(s_i(t)\big)$ where $P_{i}$ is the power function of machine $i$. (In this case we consider the discrete time setting.)

No competitive algorithm is known in the non-preemptive multiple-machine environment.
Despite of some similarities to the problem of minimizing energy plus flow-time, the main difference is that in the latter, one can make a trade-off between energy and flow-time and derive a competitive algorithm whereas for the energy minimization problem, one has to deal directly with a non-linear objective.   
The critical issue is that no linear program (LP) with relatively small integrality gap was known.
In order to derive a competitive algorithm for this problem, we make use of the primal-dual approach based on configuration LP recently developed in \cite{Thang17:Online-Primal-Dual}.
The approach consists of introducing exponential number of variables to the natural formulation in order to reduce the integrality gap.
Then, in contrast to current rounding techniques based on configuration LPs, the approach maintains greedily a competitive solution in the sense of primal-dual (without solving exponential size LPs).
Interestingly, using this approach, the power functions are not required to be convex (a crucial property for prior analyses) and the competitive ratio is characterized by a notion of smoothness defined as follows.

\begin{definition}
A set function $f: 2^{\mathcal{N}} \rightarrow \mathbb{R}^{+}$ is $(\lambda,\mu)$-\emph{smooth}
if for any set $A = \{a_{1}, \ldots, a_{n}\} \subseteq \mathcal{N}$ and any collection 
$B_{1} \subseteq B_{2} \subseteq \ldots \subseteq B_{n} \subseteq B \subseteq \mathcal{N}$, 
the following inequality holds.
$$
\sum_{i = 1}^{n} \left[ f\bigl( B_{i} \cup a_{i} \bigr) - f\bigl( B_{i} \bigr)\right]
\leq \lambda f\bigl( A \bigr) + \mu f\bigl( B \bigr)
$$
\end{definition}

\begin{theorem}\label{thm:energy}
Assume that all power functions are $(\lambda,\mu)$-smooth.
Then, there is a $\lambda/(1-\mu)$-competitive algorithm. In particular, 
if $P_{i}(s) = s^{\alpha_{i}}$ for $\alpha_{i} \geq 1$ then the algorithm is $\alpha^{\alpha}$-competitive where $\alpha = \max_{i} \alpha_{i}$.
\end{theorem}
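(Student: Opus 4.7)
The plan is to deploy the primal-dual configuration-LP framework of \cite{Thang17:Online-Primal-Dual} that is mentioned just above the theorem. First, I would formulate a configuration LP: for every job $j$ introduce a variable $x_{j,C} \in [0,1]$ indexed by every feasible single-job non-preemptive schedule $C = (i,t,s)$ specifying a machine $i$, start time $t \geq r_j$, and constant speed $s$ with $t + p_{ij}/s \leq d_j$. The covering constraint $\sum_C x_{j,C} \geq 1$ forces each job to be scheduled. The objective is the total energy $\sum_i \sum_t P_i(s_i(t))$, where $s_i(t)$ is the speed induced on machine $i$ at time $t$ by the chosen configurations. This objective is non-linear and non-convex in the $x_{j,C}$; the relaxation and the smoothness property are what bypass this.

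Next I would specify the online rule. Upon arrival of job $j$, compute for each candidate configuration $C$ the marginal energy increase $\Delta_j(C)$ produced by appending $C$ to the current schedule, commit $j$ irrevocably to the minimizing configuration $C_j^{\star}$, and set the dual variable $\beta_j = \Delta_j(C_j^{\star})$. Because the algorithm's total energy telescopes as $\sum_j \beta_j$, bounding $\sum_j \beta_j$ is equivalent to bounding the cost of the schedule.

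The primal-dual comparison is then the heart of the argument. Let $\mathcal{S}^{\star}$ denote an optimal offline schedule using configurations $a_1,\ldots,a_n$, let $B_j$ denote the set of configurations committed by the algorithm strictly before the arrival of job $j$, and let $A = \{a_1,\ldots,a_n\}$, $B = B_n \cup \{C_j^{\star}\}_{j}$. Define $f(X)$ to be the total energy induced by executing the configurations in $X$. By the greedy choice, $\beta_j \leq f(B_j \cup a_j) - f(B_j)$. Summing over $j$ and invoking $(\lambda,\mu)$-smoothness of $f$ gives $\sum_j \beta_j \leq \lambda f(A) + \mu f(B) = \lambda\, E(\mathcal{S}^{\star}) + \mu\, E(\mathcal{S})$. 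Since $E(\mathcal{S}) \leq \sum_j \beta_j$, rearranging yields $E(\mathcal{S}) \leq \frac{\lambda}{1-\mu}\, E(\mathcal{S}^{\star})$.

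For the special case $P_i(s) = s^{\alpha_i}$, I would verify that the induced set function is $(\lambda,\mu)$-smooth with $\lambda/(1-\mu) = \alpha^{\alpha}$; time slot by time slot this reduces to an application of Young's inequality on cross terms between the algorithm's speed and the optimum's speed with conjugate exponents $\alpha$ and $\alpha/(\alpha-1)$, which is exactly where the $\alpha^{\alpha}$ arises. The main obstacle I expect is neither the algorithm nor the rearrangement but the translation of the abstract set-theoretic smoothness hypothesis into a per-slot energy inequality that correctly accounts for non-preemption (each job commits to a whole interval at a single speed) and for the fact that $P_i$ need not be convex: the smoothness condition is custom-built to absorb exactly these difficulties, but one must set up $f$, the chain $B_1 \subseteq \cdots \subseteq B_n$, and the "comparison set" $A$ so that the greedy marginal bound and the smoothness bound connect cleanly.
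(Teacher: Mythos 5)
Your proposal is correct and follows essentially the same route as the paper: a configuration LP, a greedy rule that commits each arriving job to the strategy (machine, start time, constant speed) of minimum marginal energy, and the $(\lambda,\mu)$-smoothness inequality applied to the chain of partial schedules $B_1 \subseteq \cdots \subseteq B_n$ against the optimum's configurations $a_1,\ldots,a_n$, yielding $\mathrm{ALG} \leq \lambda\,\mathrm{OPT} + \mu\,\mathrm{ALG}$. The only cosmetic difference is that you rearrange this inequality directly, whereas the paper packages the same bound as an explicit feasible dual solution ($\delta_j$, $\beta_{i,j,k}$, $\gamma_i = -\frac{\mu}{\lambda}f_i(A^*_i)$) and invokes weak duality, with the second dual constraint being exactly your smoothness step and the per-time-slot reduction you flag as the "main obstacle" being exactly how the paper verifies it.
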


In the following lemma, whose proof is given in the Appendix, we consider the case of typical power functions of the form $P(s)=s^{\alpha}$, and we show that the above result is asymptotically optimal as a function of $\alpha$.

\begin{lemma} \label{lem:LB2}
Any deterministic algorithm is at least $(\alpha/9)^{\alpha}$-compe\-ti\-tive for the non-preemptive energy minimization problem even in a single machine environment.
\end{lemma}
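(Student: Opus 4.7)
The plan is to construct, for any deterministic online algorithm $A$, a single-machine instance on which the ratio of $A$'s energy to the offline optimum is at least $(\alpha/9)^\alpha$. Because $A$ is deterministic, the adversary may make release decisions adaptively, revealing new jobs only after observing what $A$ has committed to so far. The skeleton is a single ``anchor'' job $J_{0}$ released at time $0$ of unit volume and with a generous deadline $T$ to be fixed as a function of $\alpha$. Since $A$ is non-preemptive and only sees $J_{0}$, it must commit to one start time $s$ and one constant speed $v$, running $J_{0}$ on $[s, s + 1/v] \subseteq [0,T]$. I would proceed by a dichotomy on $v$ relative to a threshold $v^{\star}$ of order $\alpha$.

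In the ``fast'' regime $v \ge v^{\star}$, I release no further jobs. Then $A$ pays energy at least $v^{\alpha-1}$, while the offline optimum processes $J_{0}$ across the full slack $[0,T]$ at speed $1/T$ for energy $1/T^{\alpha-1}$, so the ratio is at least $(vT)^{\alpha-1}$. Choosing $T$ at the outset sufficiently large (specifically of order a small multiple of $(\alpha/9)/v^{\star}$) already forces this ratio to exceed $(\alpha/9)^\alpha$.

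In the ``slow'' regime $v < v^{\star}$, the interval $[s, s+1/v]$ is long and the machine is irrevocably committed to $J_{0}$ throughout it. I would then release, at some moment $s + \eta$ strictly inside this interval (for an arbitrarily small $\eta > 0$), a follow-up job $J_{1}$ of unit volume whose deadline $s + 1/v + \delta$ leaves $A$ only $\delta$ units past $J_{0}$'s completion. Non-preemption forces $J_{1}$ to be squeezed into this window of length $\delta$ at speed $1/\delta$, costing $A$ energy at least $(1/\delta)^{\alpha-1}$. The offline optimum reorders: either it runs $J_{1}$ leisurely over the long window $[s+\eta, s+1/v+\delta]$ at speed $\approx v$ and defers $J_{0}$ to $[s+1/v+\delta, T]$, or it runs $J_{0}$ early in $[0, s+\eta]$ and $J_{1}$ in the long window thereafter. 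In either case OPT pays only $O(v^{\alpha-1} + 1/T^{\alpha-1})$. Balancing $\delta$ against $v^{\star}$ extracts a per-level ratio of order $(\alpha/9)^{\alpha-1}$; if one level is insufficient, the construction iterates by releasing $J_{2}$ during $J_{1}$'s execution, $J_{3}$ during $J_{2}$'s, and so on, producing a geometric cascade that compounds the gap.

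The main obstacle I anticipate is the joint calibration of $v^{\star}$, $T$, $\delta$, and (if used) the cascade depth, so that both regimes yield the same lower bound and the resulting constant inside the exponent base collapses cleanly to $1/9$. I expect this reduces to an elementary optimization over $v^{\alpha-1}$ and $\delta^{-(\alpha-1)}$: at the balance point the energy $A$ is forced to pay in either regime is a specific multiple of OPT's, and substituting the optimal threshold $v^{\star} \approx \alpha/9$ yields the claimed bound $(\alpha/9)^\alpha$. A minor secondary issue is verifying feasibility for OPT on the same instance (all chosen processing windows are non-overlapping and contained within the respective release-to-deadline windows), which follows directly from the explicit schedule described above.
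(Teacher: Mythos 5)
There is a genuine gap, and it sits at the heart of your slow-regime argument. In the model of this paper a machine may process several jobs \emph{concurrently}, each at its own constant speed; the machine's instantaneous speed is the sum of the speeds of the jobs running at that moment, and the power is that sum raised to $\alpha$. (This is why the paper's own proof speaks of stacking ``rectangles'' and bounding the \emph{total height}, and why the configuration LP in Section 4 evaluates $P_i$ at the sum $A(t)$ of concurrent speeds.) Consequently, your claim that ``non-preemption forces $J_1$ to be squeezed into the window of length $\delta$ at speed $1/\delta$'' is false: the algorithm may simply start $J_1$ at $s+\eta$ and run it in parallel with $J_0$ at speed roughly $v$, paying about $(2v)^\alpha\cdot(1/v)=2^\alpha v^{\alpha-1}$ instead of $\delta^{-(\alpha-1)}$. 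A telltale sign is that if your squeeze step were valid, letting $\delta\to 0$ would give an \emph{unbounded} lower bound, far stronger than the stated $(\alpha/9)^\alpha$; the fact that the lemma claims only $(\alpha/9)^\alpha$ reflects exactly that the only leverage against the algorithm is the convexity penalty $\bigl(\sum_j v_j\bigr)^\alpha$ versus $\sum_j v_j^\alpha$ for concurrently running jobs.

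Two further consequences follow. First, a single extra job buys only a factor of order $2^{\alpha-1}$; to reach $(\alpha/9)^\alpha$ you need $\Theta(\alpha)$ jobs forced to overlap \emph{simultaneously}, each at speed bounded below by a constant, so that the total speed is $\Theta(\alpha)$ at some instant. The gap is then a single comparison of $(\alpha/3)^\alpha$ against the adversary's speed-$1$ schedule, not a product of per-level ratios as you propose. Second, your cascade uses unit-volume jobs in geometrically shrinking windows, which would force the \emph{offline} optimum to use geometrically growing speeds and blow up its own cost. The paper avoids both problems by releasing job $j+1$ with $r_{j+1}=S_j+1$, $d_{j+1}=C_j$ and volume $p_{j+1}=(d_{j+1}-r_{j+1})/3$: nesting the feasible window of each job strictly inside the previous job's execution interval forces all of the algorithm's executions to overlap at the last job's span with total speed at least $\alpha/3$ (whether the cascade runs for $\alpha$ rounds or the windows collapse below length $1$), while the volume being one third of the window lets the adversary run every job at speed $1$ with no overlap, for total cost at most $3^{\alpha+1}$. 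Your outer dichotomy on the first job's speed is not needed once the volumes are scaled this way.
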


\begin{proof}
The construction is inspired by the one in~\cite{LiuLiu16:Optimal-Nonpreemptive}.

Fix a deterministic on-line algorithm \textsc{Alg}.
Without loss of generality, assume that $\alpha$ is an integer. Recall that time interval has (normalized) size at least 1.
The span of job 1 is defined as $r_{1} = 0$ and $d_{1} = 3^{\alpha+1}$.
The adversary \textsc{Adv} specify the span of subsequent jobs depending on the behavior of \textsc{Alg}.
Let $S^{\textsc{Alg}}_{j}$ and $C^{\textsc{Alg}}_{j}$ be the starting time and completion time of job $j$
by algorithm $\textsc{Alg}$. For every $j \geq 1$, once algorithm \textsc{Alg} decides the starting time and the speed of job $j$ (so the completion time), \textsc{Adv} releases immediately job $j+1$ with release date $r_{j+1} = S^{\textsc{Alg}}_{j} + 1$,
deadline $d_{j+1} = C^{\textsc{Alg}}_{j}$, and volume $p_{j+1} = (d_{j+1} - r_{j+1})/3$. The instance ends when
either the number of released jobs equals $\alpha$ or $d_{k} - r_{k} \leq 1$. 

We first observe that by executing every job by speed 1, \textsc{Adv} can process all jobs such that at any moment,
no two jobs are run in parallel (or in other words, there is no overlapping). Specifically, by definition of jobs 
(especially $p_{j+1} = (d_{j+1} - r_{j+1})/3$), \textsc{Adv} can entirely execute with speed 1 a job $j$ 
outside of interval $[S^{\textsc{Alg}}_{j} + 1,C^{\textsc{Alg}}_{j}]$. So there is no overlapping with job $j+1$ and subsequent jobs. 
Hence, as the speed is at most 1, the total energy induced is at most the length of the biggest span, which is $d_{1} - r_{1} = 3^{\alpha + 1}$.

Besides, by the way \textsc{Adv} releases jobs, a job overlaps with all other jobs in the schedule of \textsc{Alg}.
Imagine now each job $j$ is initially represented by a rectangle of size $(d_{j} - r_{j})$ by $1/3$. An algorithm consists in
reshaping it to another rectangle (contracting the width and augmenting the height) 
and place them in appropriate way. Now suppose that there is a job with span $[r_{k},d_{k}]$ satisfy $d_{k} - r_{k} \leq 1$. 
In this case, the total height of all rectangles is at least $\alpha/3$. Otherwise, suppose that the instance releases $\alpha$ jobs.
Then the total height of all rectangles is also at least $\alpha/3$. In both case, the total energy during the span of the last job
is at least $(\alpha/3)^{\alpha} \cdot 1$. 

Hence, the competitive ratio is at least $(\alpha/9)^{\alpha}$. 
\end{proof}

\subsection{Related Work}
For the on-line non-preemptive scheduling problem of minimizing total weighted flow-time, 
any algorithm has at least $\Omega(n)$ competitive ratio, even for single machine where 
$n$ is the number of jobs (as mentioned in \cite{ChekuriKhanna01:Algorithms-for-minimizing}).
In identical machine environments,
\cite{PhillipsStein02:Optimal-time-critical} gave a constant competitive algorithm that uses $m \log P$ 
machines (recall that the adversary uses $m$ machines),
where $P$ is the ratio of the largest to the smallest processing time.
Moreover, an $O(\log n)$-machine $O(1)$-speed algorithm that returns 
the optimal schedule has been presented in \cite{PhillipsStein02:Optimal-time-critical} for the unweighted flow-time objective.
\cite{EpsteinVanStee06} proposed an $\ell$-machines $O(\min\{\sqrt[\ell]{P},\sqrt[\ell]{n}\})$-competitive algorithm for the unweighted case on a single machine. This algorithm is optimal up to a constant factor for constant $\ell$.
Recently, \cite{LucarelliThang16:Online-Non-preemptive} consider the problem
in the model of speed augmentation and rejection. They showed that without rejection, 
no algorithm is competitive even on single machine with speed arbitrarily faster than that of adversary.
Moreover, they gave a scalable $O(1/(\epsilon_{r} \cdot \epsilon_{s}))$-competitive
algorithm that uses machines with speed $(1+\epsilon_{s})$ and reject at most $\epsilon_{r}$ fraction of jobs for arbitrarily small 
$\epsilon_{r},\epsilon_{s} > 0$.

For the on-line non-preemptive scheduling problem of minimizing total weighted flow-time plus energy, 
to the best of our knowledge, no competitive algorithm is known. However, the problem
in the preemptive setting has been widely studied. 
\cite{BansalChan09:Speed-scaling}
gave an $O(\alpha/\log \alpha)$-competitive algorithm for weighted flow-time plus energy 
in a single machine where the energy function is $s^{\alpha}$.
Based on linear programming and dual-fitting, 
\cite{AnandGarg12:Resource-augmentation} proved an $O(\alpha^{2})$-competitive algorithm 
for unrelated machines. Subsequently, Nguyen~\cite{Thang13:Lagrangian-Duality} and
\cite{DevanurHuang14:Primal-Dual} presented an $O(\alpha/\log \alpha)$-competitive algorithms 
for unrelated machines by dual fitting and primal dual approaches, respectively. 

For the on-line non-preemptive scheduling problem of minimizing total energy consumption, 
no competitive algorithm is known. Even in the preemptive scheduling in which migration of jobs
between machines are not allowed, no algorithm with provable performance is given. 
The difficulty, as mentioned earlier, is due to the integrality gap barrier of all currently known formulations.
In single machine where the issue of non-migration does not exist, \cite{BansalKimbrel07:Speed-scaling}  
gave a $2\bigl( \frac{\alpha}{\alpha - 1} \bigr)^{\alpha} e^{\alpha}$-competitive algorithm.
Moreover, \cite{BansalChan12:Improved-Bounds} showed that no deterministic algorithm has 
competitive ratio less than $e^{\alpha-1}/\alpha$. 
\cite{AlbersBampis16:Scheduling-on-power-heterogeneous} considered the case where jobs are allowed to be executed preemptively and migration between machines is permitted.
For this problem, they proposed an algorithm based 
on the \textsc{Average Rate} algorithm \cite{YaoDemers95:A-scheduling-model} 
and they showed a competitive ratio of $(1+\epsilon)(\alpha^{\alpha} 2^{\alpha - 1} + 1)$.  
\section{Minimize Total Flow-time}
\label{sec:flow}


\paragraph{Linear Programming Formulation}
In order to formulate our problem as a linear program, for each job $j \in \mathcal{J}$, machine $i \in \mathcal{M}$ and time $t \geq r_j$, we introduce a binary variable $x_{ij}(t)$ which is equal to one if $j$ is processed on $i$ at time $t$, and zero otherwise.
We use two lower bounds on the flow-time of each job $j \in \mathcal{J}$, assuming that it is dispatched to machine $i$:
its \emph{fractional flow-time} which is defined as $\int_{r_j}^\infty \frac{t - r_j}{p_{ij}} x_{ij}(t) dt$ (see for example \cite{AnandGarg12:Resource-augmentation}), and its processing time $p_{ij}=\int_{r_j}^\infty x_{ij}(t) dt$.
Then, the linear programming formulation for the problem of minimizing the total flow-time follows.

\begin{align*}
\min \sum_{i \in \mathcal{M}} \sum_{j \in \mathcal{J}} & \int_{r_j}^\infty \Big(\frac{t - r_j}{p_{ij}}+ 1\Big) x_{ij}(t) dt \nonumber \\
\sum_{i \in \mathcal{M}} \int_0^\infty \frac{x_{ij}(t)}{p_{ij}} dt &\geq 1 & &\forall j \label{runs}\\
\sum_{j \in \mathcal{J}} x_{ij}(t) &\leq 1 & &\forall i, t  \\
 x_{ij}(t) \in & \{0,1\} & &\forall i, j, t
\end{align*}

Note that the objective value of the above linear program is at most twice that of the optimal non-preemptive schedule.
We relax the above integer linear program by replacing the integrality constraints for each $x_{ij}(t)$ with $0 \leq x_{ij}(t) \leq 1$.
The dual of the relaxed linear program is the following.

\begin{align*}
\max \sum_{j \in \mathcal{J}} \lambda_j &- \sum_{i \in \mathcal{M}} \int_0^\infty \beta_i(t) && dt \nonumber \\
\frac{\lambda_j}{p_{ij}} - \beta_i(t) &\leq \frac{t - r_j}{p_{ij}} + 1 &&\forall i, j, t \\
\lambda_j &\geq 0 &&\forall j \\
\beta_i(t) &\geq 0 &&\forall i, t
\end{align*}

In the \emph{rejection model} considered in this article, we assume that the algorithm is allowed to reject some jobs.
This can be interpreted in the primal linear program by considering only the variables corresponding to the non-rejected jobs, that is the algorithm does not have to satisfy the first constraint for the rejected jobs.

\paragraph{The Algorithm and Definition of Dual Variables}
We next define the scheduling, the rejection and the dispatching policies of our algorithm which is denoted by $\mathcal{A}$.
Let $\epsilon$, $0 < \epsilon < 1$, be an arbitrarily small constant which indicates the fraction of the total number of jobs that will be rejected.
Each job is immediately dispatched to a machine upon its arrival.
Let $U_i(t)$ be the set of \emph{pending} jobs at time $t$ dispatched to machine $i \in \mathcal{M}$, that is the jobs dispatched to $i$ that have been released but not yet completed or rejected at time~$t$.
Moreover, let $q_{ij}(t)$ be the remaining processing time at time $t$ of a job $j \in \mathcal{J}$ which has been dispatched to the machine $i$.

Let $k$ be the job that is executed on machine $i$ at time $t$.
We always consider the jobs in $U_i(t) \setminus \{k\}$ sorted in non-decreasing order with respect to their processing times; in case of ties, we consider the jobs in earliest release time order.
We say that a job $j \in U_i(t) \setminus \{k\}$ precedes (resp. succeeds) a job $\ell \in U_i(t) \setminus \{k\}$ if $j$ appears before (resp. after) $\ell$ in the above order, and we write $j \prec \ell$ (resp. $j \succ \ell$).
We use the symbols $\preceq$ and $\succeq$ to express the fact that $j$ may coincide with $\ell$.
The \emph{scheduling policy} of the algorithm $\mathcal{A}$ is the following: whenever a machine $i \in \mathcal{M}$ becomes idle at a time $t$, schedule on $i$ the job $j \in U_i(t)$ that precedes any other job in $U_i(t)$.

We use two different rules for defining our \emph{rejection policy}.
The first rule handles the arrival of a big group of jobs during the execution of a long job as in \cite{LucarelliThang16:Online-Non-preemptive}.
The second rule simulates and replaces the utility of speed-augmentation.
\begin{description}
\item[Rejection Rule 1]
At the beginning of the execution of a job $j \in \mathcal{J}$ on machine $i$, we introduce a counter $v_j$ which is initialized to zero.
Whenever a job $\ell$ is dispatched to machine $i$ during the execution of $j$, we increase $v_j$ by $1$.
Then, we interrupt and reject the job $j$ the first time when $v_j=\frac{1}{\epsilon}$.
\item[Rejection Rule 2]
For each machine $i \in \mathcal{M}$, we maintain a counter $c_i$ which is initialized to zero at $t=0$.
Whenever a job $j$ is dispatched to a machine $i$, we increase $c_i$ by $1$.
Then, we reject the job with the largest processing time in $U_i(t)\setminus\{k\}$ the first time when $c_i = 1+\frac{1}{\epsilon}$, and we reset $c_i$ to zero.
\end{description}
Let $\mathcal{R}$ be the set of all rejected jobs.
By slightly abusing the notation, we denote the rejection time of a job $j \in \mathcal{R}$ by $C_j$.
Moreover, we define the flow-time of a rejected job $j \in \mathcal{R}$ to be the difference between its rejection time and its arrival time, and we denote it by $F_j$.

At the arrival of a new job $j \in \mathcal{J}$, let $\Delta_{ij}$ be the increase in the total flow-time if we decide to dispatch the job $j$ to the machine $i$.
Fix a machine $i$ and let $k$ be the job that is executed on $i$ at $r_j$.
Then, assuming that $j$ is dispatched to $i$ (i.e., assuming that $j \in U_i(r_j)$), we have that
\begin{align*}
\Delta_{ij} = & \ q_{ik}(r_j) \cdot \mathbbm{1}_\text{\{if $k$ is not rejected (due to Rule 1)\}} + \sum_{\ell \preceq j} p_{i\ell} \\
 & + \sum_{\ell \succ j} p_{ij} \\
 & - \biggl(q_{ik}(r_j) + \sum_{\ell\not=j} q_{ik}(r_j) \biggr) \cdot \mathbbm{1}_{\{\text{if } k \text{ is rejected due to Rule 1}\}}\\
 & - \biggl(q_{ik}(r_j) + \sum_{\ell \not= j} p_{i\ell} + p_{ij_{\max}} \biggr) \cdot \mathbbm{1}_{\{\text{if } j_{\max} \text{ is rejected due to Rule 2}\}}
\end{align*}
where the first term corresponds to the flow-time of the new job $j$,
the second term corresponds to the increase of the flow-time for the jobs in $U_i(r_j)$ due to the dispatching of $j$ to machine $i$,
the third term corresponds to the decrease of the flow-time for the jobs in  $U_i(r_j)\cup\{k\}$ due to the rejection of $k$ (according to Rule 1),
and the forth term corresponds to the decrease of the flow-time of the largest job $j_{\max}$ due to its rejection (according to Rule 2).
Based on the above, we define
\begin{equation*}
\lambda_{ij} = \frac{1}{\epsilon} p_{ij} + \sum_{\ell \preceq j} p_{i\ell} + \sum_{\ell \succ j} p_{ij}
\end{equation*}
Then, our \emph{dispatching policy} is the following: at the arrival of a new job $j$ at time $r_j$, dispatch $j$ to the machine $i^* = \text{argmin}_{i \in \mathcal{M}} \lambda_{ij}$.

The quantity $\lambda_{ij}$ is strongly related with the marginal increase $\Delta_{ij}$.
However, all negative terms that appear in $\Delta_{ij}$ have been eliminated in $\lambda_{ij}$.
Moreover, the positive quantity $q_{ik}(r_j)$ does not appear in $\lambda_{ij}$, but we have added the term $\frac{1}{\epsilon} p_{ij}$.
The intuition for the definition of $\lambda_{ij}$ is to charge an upper bound to the marginal increase $\Delta_{ij}$ to the $\lambda_{i\ell}$ quantities of some jobs dispatched to $i$.
Specifically, the quantity $\sum_{\ell \preceq j} p_{i\ell} + \sum_{\ell \succ j} p_{ij}$ is charged to $\lambda_{ij}$.
If the positive quantity $q_{ik}(r_j)$ exists, then it is charged to the term $\frac{1}{\epsilon} p_{ik}$ of $\lambda_{ik}$ (i.e., to the job $k$ that is executed on $i$ at the arrival of $j$).
The rejection Rule~1 guarantees that this term is sufficient for all jobs arrived and dispatched to $i$ during the execution of $k$.

In order to deal with the ignored negative terms, we expand the notion of completion time of each job $j \in \mathcal{J}$.
Let $D_j$ be the set of jobs that are rejected due to Rule~1 after the release time of $j$ and before its completion or rejection (including $j$ in case it is rejected), that is the jobs that cause a decrease to the flow time of $j$ due to Rule~1.
Moreover, we denote by $j_k$ the job released at the moment we reject a job $k \in \mathcal{R}$.
Then, we say that a job $j \in \mathcal{J}$ which is dispatched to machine $i$ is \emph{definitively finished} at the time
\begin{eqnarray*}
\tilde{C}_j & = & C_j + \sum_{k \in D_j} q_{ik}(r_{j_k}) \\
&& + \biggl(q_{ik}(r_{j_j}) + \sum_{\ell \not= j_j} p_{i\ell} + p_{ij} \biggr) \cdot \mathbbm{1}_{\{\text{if } j \text{ is rejected due to Rule 2}\}}
\end{eqnarray*}
Let $V_i(t)$ be the set of jobs that are completed or rejected at time $t$ but not yet definitively finished.
Intuitively, at the completion or rejection of job $j$ at time $C_j$ is moved from the set of pending jobs $U_i(t)$ to the set of not yet definitively finished jobs $V_i(t)$, and it remains to this set until the time $\tilde{C}_j$.
Let $R_i(t) \subseteq V_i(t)$ be the set of jobs that are already rejected due to Rule~2 at time $t$ but they are not yet definitively finished.

It remains to formally define the dual variables.
At the arrival of a job $j \in \mathcal{J}$, we set $\lambda_j = \frac{\epsilon}{1+\epsilon} \min_{i \in \mathcal{M}} \lambda_{ij}$ and we never change this value again.
Moreover, for each $i \in \mathcal{M}$ and $t \geq 0$, we set $\beta_i(t) = \frac{\epsilon}{(1+\epsilon)^2}(|U_i(t)|+|V_i(t)|)$.
Note that, given any fixed time $t$, $\beta_i(t)$ may increase if a new job arrives at any time $t' < t$.
However, $\beta_i(t)$ never decreases in the case of rejection since the rejected jobs are transferred to the set $V_i(t)$ where they remain until they are definitively finished.

\paragraph*{Analysis}
We first show the following lemma which relates all but $c_i$ jobs in $U_i(t)$ to some jobs in $R_i(t)$.

\begin{lemma} \label{lem:mapping}
Fix a machine $i$ and a time $t$.
Consider the jobs in $R_i(t)$ sorted in non-decreasing order of the time they are definitively finished; let $k_1,k_2,\ldots,k_r$ be this order, where $r=|R_i(t)|$.
There is a partition of the jobs in $U_i(t)$ into at most $r+1$ subsets, $U_i^1(t), U_i^2(t), \ldots, U_i^{r+1}(t)$ such that
\begin{enumerate}[label=(\roman*)]
\item $|U_i^\ell(t)| \leq \frac{1}{\epsilon}$, for $1 \leq \ell \leq r$,
\item $|U_i^{r+1}(t)| \leq c_i$,
\item for each job $j \in U_i^\ell(t)$, $1 \leq \ell \leq r$, the estimated completion time of $j$ assuming that no other job is released after time $t$ is at most $\tilde{C}_{k_\ell}$.
\end{enumerate}
\end{lemma}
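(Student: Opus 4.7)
The plan is to construct the partition of $U_i(t)$ by grouping the pending jobs according to the chronological ``epochs'' delimited by the Rule~2 rejection times of $R_i(t)$, and then to re-index so that the $\ell$-th group aligns with $k_\ell$ in the $\tilde{C}$-sorted order. Concretely, let $T^*_1 < T^*_2 < \cdots < T^*_r$ denote the chronological rejection times of the jobs $k^*_1, \ldots, k^*_r$ in $R_i(t)$, and let $T^*_0$ be the most recent Rule~2 rejection time on machine~$i$ before $T^*_1$ (or $0$ if none). For $1 \leq \ell \leq r$, define $\tilde{U}_i^\ell(t) := \{j \in U_i(t) : r_j \in (T^*_{\ell-1}, T^*_\ell]\}$ and $\tilde{U}_i^{r+1}(t) := \{j \in U_i(t) : r_j > T^*_r\}$; then setting $U_i^\ell(t) := \tilde{U}_i^{\sigma(\ell)}(t)$ for $\ell \leq r$ and $U_i^{r+1}(t) := \tilde{U}_i^{r+1}(t)$ gives the required partition, where $\sigma$ is the permutation matching the chronological indexing to the $\tilde{C}$-sorted indexing.

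For the size bounds, (ii) is immediate since $c_i$ counts precisely the dispatches to machine~$i$ after the most recent Rule~2 rejection at $T^*_r$. For (i), the counter mechanism guarantees exactly $1 + 1/\epsilon$ dispatches in each interval $(T^*_{\ell-1}, T^*_\ell]$; a counting argument matching each Rule~2 rejection in $R_i(t)$ to the epoch whose batch loses a member (and using that earlier, already definitively-finished Rule~2 rejections absorb any jobs dispatched at or before $T^*_0$) then yields $|\tilde{U}_i^\ell(t)| \leq 1/\epsilon$.

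The main obstacle is property (iii). For $j \in U_i^\ell(t) = \tilde{U}_i^{\sigma(\ell)}(t)$, the release time of $j$ is at most $T^*_{\sigma(\ell)}$ and $j$ is still pending at $t$, so $j \in U_i(T^*_{\sigma(\ell)})$ and $p_{ij}$ is included in the total work $W(T^*_{\sigma(\ell)})$ pending at $T^*_{\sigma(\ell)}$. By the definition of $\tilde{C}$ for a Rule~2 rejection, $\tilde{C}_{k_\ell} = T^*_{\sigma(\ell)} + W(T^*_{\sigma(\ell)})$. The estimated completion time of $j$ at $t$ under the algorithm's SRPT-style scheduling (with no future arrivals) equals $t + q_{ik}(t) + \sum_{j' \preceq j,\, j' \in U_i(t)} p_{ij'}$; one bounds this quantity by charging the SRPT prefix up to $j$ against $W(T^*_{\sigma(\ell)})$, subtracting the processing done during $[T^*_{\sigma(\ell)}, t]$, and using the invariant that Rule~2 always rejects the largest pending job so that $p_{ij'} \leq p_{ik^*_{\sigma(\ell)}}$ for every $j'$ in the sum. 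The delicate step is managing the interleaving of post-$T^*_{\sigma(\ell)}$ arrivals with the SRPT order of $U_i(t)$; this will require an invariant on the evolution of the remaining work $W$ between $T^*_{\sigma(\ell)}$ and $t$ that exploits the fact that a large new job dispatched in the interval either triggers another Rule~2 rejection or else appears after $j$ in the SRPT order.
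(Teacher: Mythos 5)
There is a genuine gap here, and I do not think the static, release-epoch-based partition can be repaired without effectively abandoning it. Two concrete problems. First, property (i) is not secured by your counting argument: Rule~2 rejects the job with the largest processing time among \emph{all} pending jobs, not among those dispatched in the current epoch. So the rejection at $T^*_\ell$ may remove a job released in a much earlier epoch, or even one released before $T^*_0$ (whose own rejection is already definitively finished); in that case the epoch $(T^*_{\ell-1},T^*_\ell]$ retains all $1+\frac{1}{\epsilon}$ of its dispatches, and no matching of rejections to epochs fixes this, because the victims may all come from outside the epochs you enumerate. Relatedly, pending jobs with $r_j\le T^*_0$ are not covered by any $\tilde U_i^\ell(t)$, so what you define need not even be a partition of $U_i(t)$.

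Second, and more fundamentally, property (iii) --- which you explicitly defer to an unstated invariant --- is exactly where a static assignment breaks. Under the shortest-processing-time-first scheduling policy, a long job $j$ released in epoch $\ell$ is overtaken by every shorter job released in later epochs, so its estimated completion time at time $t$ accumulates the processing times of arbitrarily many post-$T^*_\ell$ arrivals and can exceed $\tilde C_{k_\ell}$ by an unbounded amount; such a job must be \emph{re-associated} with a later-finishing member of $R_i(t)$ as shorter jobs arrive, which a partition fixed by release time cannot do. The paper's proof does precisely this: it maintains the partition inductively over discrete events, and on each arrival the new (shorter) job displaces the longest job of the set it enters, which cascades into the next set, and so on into $U_i^{r+1}$; a Rule~2 rejection then promotes $U_i^{r+1}$ to a new set associated with the freshly rejected (largest, hence latest-finishing in that set) job. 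This dynamic cascading is the content you would have to reconstruct inside your ``delicate step.'' As written, your proposal proves (ii), gestures at (i), and leaves (iii) open, so it does not establish the lemma.
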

\begin{proof}
The proof is based on induction on time.
We consider only times which correspond to discrete events that modify the sets $U_i(t)$ and $R_i(t)$, i.e., arrival of a new job, completion of a job, rejection of a job according to Rule~2 and definitive finish of a job in $R_i(t)$.

At the arrival of the first job dispatched to machine $i$, we have that $c_i=1$ and the statement directly holds.
Let us assume that the partition exists at an event which occurs at time $t$.
We will show that this holds also for the next event at time $t' \geq t$.
We consider the following three cases.
\begin{itemize}[leftmargin=*,topsep=5pt]
\item If a job $j$ completes at time $t'$, then $j$ is removed from $U_i(t')$ without affecting the mapping implied by the statement of the lemma.
\item If a job $j$ arrives at time $t'$, then $c_i$ is increased by one.
Let $j_\ell$, $1 \leq \ell \leq r+1$, be the job with the largest processing time in $U_i^\ell(t)$.
If $p_j \geq p_{j_r}$, then we set $U_i^\ell(t')=U_i^\ell(t)$ for $1 \leq \ell \leq r$ and $U_i^{r+1}(t')=U_i^{r+1}(t) \cup \{j\}$ and the partition is valid since $c_i$ is increased.
Otherwise, find the biggest $z$, $1 \leq z \leq \ell$, such that $p_j < p_{j_z}$.
We set $U_i^\ell(t')=U_i^\ell(t)$ for $1 \leq \ell \leq z-1$, $U_i^z(t')=U_i^z(t) \cup \{j\} \setminus \{j_z\}$, and $U_i^\ell(t')=U_i^\ell(t) \cup \{j_{\ell-1}\} \setminus \{j_\ell\}$ for $z+1 \leq \ell \leq r+1$.
By these definitions, the first two items of the lemma are satisfied by the induction hypothesis since each set, except for $U_i^{r+1}$, has the same size at times $t$ and $t'$.
For item~(iii), we observe that the job that is added in each set $U_i^\ell$, $z \leq \ell \leq r$, has a shorter processing time than the job which is removed.
Hence, the item~(iii) holds by the definition of the scheduling policy.
Moreover, if a job $k$ is rejected according to Rule~2 at time $t'$, then $R_i(t')=R_i(t) \cup \{k\}$ and $U_i^{|R_i(t')|}(t')=U_i^{r+1}(t)$.
Therefore, the lemma holds since $c_i \leq 1+\frac{1}{\epsilon}$ and $k$ is the job with the largest processing time (and hence the largest estimated completion time) in $U_i^{r+1}(t)$.
\item If the job $k_1$ is definitively finished at time $t'$, then assume that $U_i^1(t)$ is not empty.
Then, by the induction hypothesis each job $j \in U_i^1(t)$ should complete before $t'$, which is a contradiction to the fact that $t'$ is the next event after $t$.
\end{itemize}
Therefore, the lemma follows.
\end{proof}

The following corollary is an immediate consequence of Lemma~\ref{lem:mapping}.

\begin{corollary} \label{cor:mapping}
For each $t$, it holds that $|U_i(t)| \leq \frac{1}{\epsilon}(|R_i(t)|+1)$.
\end{corollary}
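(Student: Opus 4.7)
The plan is to derive this directly as an arithmetic consequence of Lemma~\ref{lem:mapping}, so essentially no new ideas are required; the work is bookkeeping on the sizes of the parts plus a bound on $c_i$.

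First, I would fix a machine $i$ and a time $t$, set $r = |R_i(t)|$, and invoke Lemma~\ref{lem:mapping} to obtain a partition $U_i^1(t), U_i^2(t), \ldots, U_i^{r+1}(t)$ of $U_i(t)$ with $|U_i^\ell(t)| \le 1/\epsilon$ for every $\ell \le r$ and $|U_i^{r+1}(t)| \le c_i$. Summing,
\begin{equation*}
|U_i(t)| \;=\; \sum_{\ell=1}^{r+1} |U_i^\ell(t)| \;\le\; \frac{r}{\epsilon} + c_i.
\end{equation*}

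Next, I would argue that between two consecutive event-processing steps the counter $c_i$ satisfies $c_i \le 1/\epsilon$. This follows from Rejection Rule~2: $c_i$ is incremented only on a dispatch to machine $i$, and the very step in which it would reach the value $1 + 1/\epsilon$ simultaneously triggers a rejection and a reset of $c_i$ to $0$. Hence, once the event at time $t$ has been fully processed (arrival, then possible rejection and reset), $c_i \le 1/\epsilon$. Plugging this into the previous display yields
\begin{equation*}
|U_i(t)| \;\le\; \frac{r}{\epsilon} + \frac{1}{\epsilon} \;=\; \frac{|R_i(t)|+1}{\epsilon},
\end{equation*}
as claimed.

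The main thing that could trip one up is the timing convention around Rule~2: one has to be careful that the moment at which the corollary is stated is ``after'' the rejection and reset, not the fleeting instant where $c_i$ equals $1+1/\epsilon$. Once this is fixed (consistent with how Lemma~\ref{lem:mapping} is itself proven event by event), everything else is a one-line calculation.
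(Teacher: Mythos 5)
Your proof is correct and is exactly the computation the paper leaves implicit when it calls the corollary ``an immediate consequence'' of Lemma~\ref{lem:mapping}: sum the part sizes, and bound $c_i\le 1/\epsilon$ using the reset in Rejection Rule~2. Your remark about the timing convention (the bound holds after the rejection and reset have been processed, consistent with the event-by-event induction in the lemma) is the right subtlety to flag, and nothing further is needed.
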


The following lemma guarantees that the definition of the dual variables lead always to a feasible solution for the dual program.

\begin{lemma}\label{lem:dual}
For all $i \in \mathcal{M}$, $j \in \mathcal{J}$ and $t \geq r_j$,
the dual constraint is feasible.
\end{lemma}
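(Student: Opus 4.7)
The approach is to verify $\frac{\lambda_j}{p_{ij}} \leq \frac{t-r_j}{p_{ij}} + 1 + \beta_i(t)$ for every $(i,j,t)$ with $t \geq r_j$ in three steps: first bound $\lambda_j/p_{ij}$ in terms of the machine-$i$ state at $r_j$, then verify the constraint at $t = r_j$ via Corollary~\ref{cor:mapping}, and finally extend to $t > r_j$ by an amortization argument.

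First I would use $\lambda_j \leq \frac{\epsilon}{1+\epsilon}\lambda_{ij}$ (from the $\min$ definition), expand
\[
\frac{\lambda_{ij}}{p_{ij}} = \frac{1}{\epsilon} + \sum_{\ell \preceq j}\frac{p_{i\ell}}{p_{ij}} + |\{\ell \succ j\}|,
\]
and use $p_{i\ell} \leq p_{ij}$ whenever $\ell \preceq j$ (the SRPT-style ordering) to obtain $\frac{\lambda_{ij}}{p_{ij}} \leq \frac{1}{\epsilon} + |U_i(r_j)|$ when machine $i$ is busy at $r_j$. The idle case gives $\frac{\lambda_{ij}}{p_{ij}} = \frac{1}{\epsilon}+1$, whence $\frac{\lambda_j}{p_{ij}} \leq 1$ and the constraint holds trivially. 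In the busy case I get $\frac{\lambda_j}{p_{ij}} \leq \frac{1}{1+\epsilon} + \frac{\epsilon}{1+\epsilon}|U_i(r_j)|$; substituting this together with $\beta_i(r_j) = \frac{\epsilon}{(1+\epsilon)^2}(|U_i(r_j)|+|V_i(r_j)|)$ into the constraint at $t=r_j$ and clearing common factors, the inequality reduces to $\epsilon|U_i(r_j)| \leq |V_i(r_j)| + (1+\epsilon)$. This follows from Corollary~\ref{cor:mapping} combined with $R_i(r_j) \subseteq V_i(r_j)$: indeed $\epsilon|U_i(r_j)| \leq |R_i(r_j)|+1 \leq |V_i(r_j)|+1$.

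For $t > r_j$ the left-hand side of the dual constraint is fixed whereas the right-hand side changes only at discrete events. Arrivals to $i$ increase $|U_i|+|V_i|$ by one, completions and Rule~1/Rule~2 rejections leave it unchanged (the affected job migrates from $U_i$ into $V_i$), and only definitive-finish events reduce it, each dropping $\beta_i$ by $\frac{\epsilon}{(1+\epsilon)^2}$. The total drop in $\beta_i$ over $[r_j,t]$ must therefore be dominated by the continuous growth of $(t-r_j)/p_{ij}$. For this I would use the structure of $\tilde{C}_{j'}$, which buffers each completion/rejection by at least the job's own processing time on $i$ (plus additional terms from Rule~1 or Rule~2), to bound the number of definitive-finish events per unit of elapsed machine time.

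The main obstacle is precisely this last amortization. The definition of $\tilde{C}_{j'}$ is intricate, combining the $q_{ik}(r_{j_k})$ residuals of Rule~1 rejections with the Rule~2 penalty $\sum_{\ell \neq j_j} p_{i\ell} + p_{ij'}$, so bounding the total $\beta_i$-loss over $[r_j,t]$ by $(t-r_j)/p_{ij}$ requires carefully weaving together both rejection rules with the SRPT ordering and, most likely, a fresh application of Corollary~\ref{cor:mapping} at time $t$. The specific constants $\frac{\epsilon}{1+\epsilon}$ in $\lambda_j$ and $\frac{\epsilon}{(1+\epsilon)^2}$ in $\beta_i$ are calibrated precisely to provide the slack needed to close this amortization; a purely pointwise verification cannot work because the right-hand side is non-monotone in $t$.
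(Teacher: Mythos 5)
Your treatment of $t = r_j$ is correct and matches the paper's calculation: the bound $\lambda_{ij}/p_{ij} \le 1/\epsilon + |U_i(r_j)|$ via $p_{i\ell} \le p_{ij}$ for $\ell \preceq j$, then Corollary~\ref{cor:mapping} together with $R_i(t) \subseteq V_i(t)$. But the case $t > r_j$ is where the lemma actually has content, and there your argument has a genuine gap: you set up an amortization over discrete events (arrivals, completions, rejections, definitive finishes) whose goal is to bound the total drop of $\beta_i$ over $[r_j,t]$ by the growth of $(t-r_j)/p_{ij}$, and then explicitly concede that you cannot close it. As written, the proof is incomplete precisely on the step that matters.

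The missing idea is a reduction that makes a pointwise verification possible, contrary to your claim that ``a purely pointwise verification cannot work.'' For a \emph{fixed} time $t$, the quantity $\beta_i(t) = \frac{\epsilon}{(1+\epsilon)^2}(|U_i(t)|+|V_i(t)|)$ can only increase as the algorithm processes further arrivals: an arrival adds to $U_i(t)$, and a completion or rejection merely moves the job into $V_i(t)$, where it stays until definitively finished. Hence it suffices to verify the constraint under the assumption that no job arrives after $r_j$. Under that assumption the schedule on machine $i$ after $r_j$ is determined, and one performs a case analysis on which job $z$ is running at time $t$ (the job $k$ active at $r_j$, some $z \preceq j$, or some $z \succ j$): the jobs $\ell \prec z$ have been fully processed inside $[r_j,t]$, so $t - r_j \ge \sum_{\ell \prec z} p_{i\ell}$, and this term absorbs exactly the part of $\frac{1}{p_{ij}}\sum_{\ell \preceq j} p_{i\ell}$ corresponding to jobs that have left $U_i$ by time $t$; the remaining jobs are counted by $|U_i(t)|$ and handled by Corollary~\ref{cor:mapping} exactly as in your $t = r_j$ computation. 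Your non-monotonicity worry conflates two different monotonicities: the right-hand side need not be monotone in $t$, but for each fixed $t$ it is monotone in the algorithm's progress, and that is the only monotonicity the proof needs.
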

\begin{proof}
For a machine $i$ and a job $j$, observe that for any fixed $t \geq r_j$, the value of $\beta_i(t)$ may only increase during the execution of the algorithm.
Hence, it is sufficient to prove the constraint assuming that no job arrives after $r_j$.
Assume that the job $k$ is executed on the machine $i$ at the arrival of the job $j$.
We have the following cases.
\medskip

\noindent\textbf{Case 1:} The job $k$ is executed at $t$.
By the definition of $\lambda_j$ and $\lambda_{ij}$, we have:
\begin{align*}
\frac{\lambda_j}{p_{ij}} & \leq \frac{\epsilon}{1+\epsilon} \left(\frac{1}{\epsilon} + \frac{1}{p_{ij}} \sum_{\ell \preceq j} p_{i\ell} + \sum_{\ell \succ j} 1\right)
  \leq \frac{\epsilon}{1+\epsilon} \left(\frac{1}{\epsilon} + \sum_{\ell \preceq j} 1 + \sum_{\ell \succ j} 1\right)\\
 & \text{(since $p_{i\ell} \leq p_{ij}$ for all $\ell \preceq j$)} \\
 & \leq \frac{\epsilon}{1+\epsilon} \left(\frac{1}{\epsilon} + |U_i(t)| + \frac{t-r_j}{p_{ij}}\right)\\
 & \text{(since $t-r_j\geq0$)}
\end{align*}

\noindent\textbf{Case 2:} A job $z \preceq j $ is executed at $t$.
Then, we have $t-r_j \geq \sum_{\ell \prec z} p_{i\ell}$.
Using the definition of $\lambda_j$ and $\lambda_{ij}$, we have:
\begin{align*}
\frac{\lambda_j}{p_{ij}} & \leq \frac{\epsilon}{1+\epsilon} \left(\frac{1}{\epsilon} + \frac{1}{p_{ij}} \sum_{\ell \preceq j} p_{i\ell} + \sum_{\ell \succ j} 1\right)\\
 & = \frac{\epsilon}{1+\epsilon} \left(\frac{1}{\epsilon} + \frac{1}{p_{ij}} \sum_{\ell \prec z} p_{i\ell} + \frac{1}{p_{ij}} \sum_{z \preceq \ell \preceq j} p_{i\ell} + \sum_{\ell \succ j} 1\right)\\
 & \leq \frac{\epsilon}{1+\epsilon} \left(\frac{1}{\epsilon} + \frac{t-r_j}{p_{ij}} + \sum_{z \preceq \ell \preceq j} 1 + \sum_{\ell \succ j} 1\right)\\
 & \text{(since $p_{i\ell} \leq p_{ij}$ for all $\ell \preceq j$)} \\
 & \leq \frac{\epsilon}{1+\epsilon} \left(\frac{1}{\epsilon} + \frac{t-r_j}{p_{ij}} + |U_i(t)|\right)
\end{align*}

\noindent\textbf{Case 3:} A job $z \succ j $ is executed at $t$.
Then, we have $t-r_j \geq \sum_{\ell \prec z} p_{i\ell}$.
Using the definition of $\lambda_j$ and $\lambda_{ij}$, we have:
\begin{align*}
\frac{\lambda_j}{p_{ij}} & \leq \frac{\epsilon}{1+\epsilon} \left(\frac{1}{\epsilon} + \frac{1}{p_{ij}} \sum_{\ell \preceq j} p_{i\ell} + \sum_{\ell \succ j} 1\right)\\
 & = \frac{\epsilon}{1+\epsilon} \left(\frac{1}{\epsilon} + \frac{1}{p_{ij}} \sum_{\ell \prec j} p_{i\ell} + \sum_{j \prec \ell \prec z} \frac{p_{i\ell}}{p_{i\ell}} + \sum_{\ell \succeq z} 1\right)\\
 & \leq \frac{\epsilon}{1+\epsilon} \left(\frac{1}{\epsilon} + \frac{1}{p_{ij}} \sum_{\ell \prec j} p_{i\ell} + \sum_{j \prec \ell \prec z} \frac{p_{i\ell}}{p_{ij}} + \sum_{\ell \succeq z} 1\right)\\
 & \text{(since $p_{i\ell} > p_{ij}$ for all $\ell \succ j$)} \\
 & \leq \frac{\epsilon}{1+\epsilon} \left(\frac{1}{\epsilon} + \frac{t-r_j}{p_{ij}} + |U_i(t)|\right)
\end{align*}
\medskip

Hence, in all the three cases we have:
\begin{align*}
\frac{\lambda_j}{p_{ij}} & \leq \frac{\epsilon}{1+\epsilon} \left(\frac{1}{\epsilon} + \frac{t-r_j}{p_{ij}} + |U_i(t)|\right) \\
 & = \frac{\epsilon}{1+\epsilon} \left(\frac{1}{\epsilon} + \frac{t-r_j}{p_{ij}} + \frac{|U_i(t)|+\epsilon|U_i(t)|}{1+\epsilon}\right) \\
 & \leq \frac{\epsilon}{1+\epsilon} \left(\frac{1}{\epsilon} + \frac{t-r_j}{p_{ij}} + \frac{|U_i(t)| + |R_i(t)|+1}{1+\epsilon}\right) \\
 & \text{(by Corollary~\ref{cor:mapping})} \\
 & \leq \frac{1}{1+\epsilon} + \frac{\epsilon}{(1+\epsilon)^2} + \frac{\epsilon}{1+\epsilon} \frac{t-r_j}{p_{ij}} + \beta_i(t) 
  < 1 + \frac{t-r_j}{p_{ij}} + \beta_i(t)
\end{align*}
and the lemma follows.
\end{proof}



Using the above results, we next prove Theorem~\ref{thm:flow}.

\begin{proof}[Proof of Theorem~\ref{thm:flow}]
An immediate consequence of the definition of the two rejection rules is that the jobs rejected by algorithm $\mathcal{A}$ is at most a $2\epsilon$-fraction of the total number of jobs in $\mathcal{J}$.
By Lemma~\ref{lem:dual}, we know that the proposed definition of the dual variables leads to a feasible dual solution.
For the objective value of the dual program, by the definition of $\lambda_j$ and $\tilde{C}_j$, we have that
\begin{equation*}
\sum_{j \in \mathcal{J}} \lambda_j \geq \frac{\epsilon}{1+\epsilon} \sum_{j \in \mathcal{J}} (\tilde{C}_j - r_j)
\end{equation*}
Moreover, by the definition of $U_i(t)$, $V_i(t)$ and $\tilde{C}_j$, we have that
\begin{equation*}
\sum_{i \in \mathcal{M}} \int_0^{\infty} \beta_i(t) = \frac{\epsilon}{(1+\epsilon)^2} \sum_{j \in \mathcal{J}} (\tilde{C}_j - r_j)
\end{equation*}
Then, the dual objective is at least
\begin{equation*}
\left(\frac{\epsilon}{1+\epsilon}\right)^2 \sum_{j \in \mathcal{J}} (\tilde{C}_j - r_j)
\end{equation*}

Let $F_j^{\mathcal{A}}$ be the flow time of a job $j \in \mathcal{J}$ in the schedule constructed by algorithm $\mathcal{A}$;
recall that, for a rejected job $j \in \mathcal{R}$, $F_j^{\mathcal{A}}$ corresponds to the time between its release and its rejection.
By definition, we have that $\tilde{C}_j - r_j \geq F_j^{\mathcal{A}}$, for each $j \in \mathcal{J}$.
Therefore, taking into account that the objective value of our primal linear program is at most twice the value of an optimal
non-preemptive schedule, the theorem follows.
\end{proof}

\section{Minimize Total Weighted Flow Time plus Energy}
\label{sec:flow+energy}

\paragraph{Linear Programming Formulation}
Let $\delta_{ij} = \frac{w_j}{p_{ij}}$ be the \emph{density} of a job $j \in \mathcal{J}$ on machine $i \in \mathcal{M}$. 
Let $s_{ij}(t)$ be a variable that represents the speed at which the job $j \in \mathcal{J}$ is executed on machine $i \in \mathcal{M}$ at time $t$.
Given a constant $\gamma$ that will be defined later, we consider the following convex programming formulation for the problem of minimizing the total weighted flow time plus energy.
\begin{align*}
\min & \sum_{i \in \mathcal{M}} \sum_{j \in \mathcal{J}} \int_{r_j}^{\infty} s_{ij}(t)\delta_{ij} (t-r_j+p_{ij}) dt \nonumber \\
 & + \frac{\alpha}{\gamma(\alpha-1)} \sum_{i \in \mathcal{M}} \sum_{j \in \mathcal{J}} w_{j}^{\frac{\alpha-1}{\alpha}} \int_{r_j}^{\infty} s_{ij}(t) dt \nonumber \\
 & + \sum_{i \in \mathcal{M}} \int_{r_j}^{\infty} \Big(\sum_{j \in \mathcal{J}} s_{ij}(t)\Big)^{\alpha} dt \nonumber \\
\sum_{i \in \mathcal{M}} & \int_{r_j}^{\infty} \frac{s_{ij}(t)}{p_{ij}} dt \geq 1  & \forall j \in \mathcal{J} \\
s_{ij}(&t)  \geq 0 & \forall i \in \mathcal{M}, j\in \mathcal{J}, t\geq r_j
\end{align*}

The first and the second~\cite{AnandGarg12:Resource-augmentation} terms of the objective correspond to the weighted fractional flow time whereas the third term corresponds to the total energy consumed.
In order to linearize the convex energy term, we use the following property which holds for any convex function $f(x)$: $f(x) \geq f(y) + f'(y) (x-y)$.
Thus, we can relax the objective function by replacing its last term by
\begin{equation*}
\sum_{i \in \mathcal{M}} \int_{0}^{\infty} (1 - \alpha)\big(u_{i}(t)\big)^{\alpha} dt 
+ \sum_{i \in \mathcal{M}} \int_{0}^{\infty} \alpha \big(u_{i}(t)\big)^{\alpha-1} \Big(\sum_{j \in \mathcal{J}} s_{ij}(t)\Big) dt
\end{equation*}

Note that the only variables in the above formulation are $s_{ij}(t)$.
The quantities $u_{i}(t)$ are constants that will be defined later.
In fact, $u_{i}(t)$'s will be treated as dual variables and they will be defined during the primal-dual procedure.
The dual of the above LP is the following:
\begin{eqnarray*}
& \max \sum_{j \in \mathcal{J}} \lambda_j  + \sum_{i \in \mathcal{M}} \int_{0}^{\infty} (1-\alpha) \big(u_{i}(t)\big)^{\alpha} dt \nonumber \\
& \frac{\lambda_j}{p_{ij}} \leq \delta_{ij} (t-r_j+p_{ij}) + \alpha\big(u_{i}(t)\big)^{\alpha-1}
 + \frac{\alpha}{\gamma(\alpha-1)} w_{j}^{\frac{\alpha-1}{\alpha}}\\
& \hfill \forall i \in \mathcal{M}, j \in \mathcal{J}, t \geq r_j
\end{eqnarray*}

\paragraph{The Algorithm and Definition of Dual Variables}
In this section, we define the scheduling, the rejection and the dispatching policies of our algorithm which is denoted by $\mathcal{A}$. 
Let $0 < \epsilon < 1$ be some arbitrarily small constant which corresponds to the fraction of the rejected weights.
Each job is immediately dispatched to some machine $i \in \mathcal{M}$ upon its arrival.
Let $U_i(t)$ be the set of \emph{pending} jobs at time $t$ dispatched to machine $i \in \mathcal{M}$, that is the jobs dispatched to $i$ that have been released but not yet completed or rejected at time $t$.
Moreover, let $q_{ij}(t)$ be the remaining volume at time $t$ of job $j$ which is dispatched to machine $i$.

Let $k$ be the job that is being executed on machine $i$ at time $t$.
We consider the jobs in $U_i(t) \setminus \{k\}$ sorted in non-increasing order with respect to their densities; in case of ties, we consider the jobs in earliest release time order.
We say that a job $j \in U_i(t) \setminus \{k\}$ precedes (resp. succeeds) a job $\ell \in U_i(t) \setminus \{k\}$ if $j$ appears before (resp. after) $\ell$ in the above order, and we write $j \prec \ell$ (resp. $j \succ \ell$).
We use the symbols $\preceq$ and $\succeq$ to express the fact that $j$ may coincide with $\ell$.

The \emph{scheduling policy} of the algorithm $\mathcal{A}$ is the following: whenever a machine $i \in \mathcal{M}$ becomes idle at a time $t$, schedule on $i$ the job $j \in U_i(t)$ that precedes any other job in $U_i(t)$.
The \emph{speed} of the machine $i$ at the start time $j$ is defined as $s_{ij} = \gamma \Big(\sum_{\ell \in U_i(t)} w_{\ell}\Big)^{1/\alpha}$.
Note that, the speed of $i$ is defined at the beginning of the execution of $j$ and does not change until $j$ is completed or rejected.
Assuming that no other jobs arrive in the future, we can compute the expected speed of each remaining pending job $\ell \in U_i(t)$ which is equal to $\gamma\Big(\sum_{\ell' \succeq \ell} w_{\ell'}\Big)^{1/\alpha}$.

As soon as the machine $i$ starts executing a job $j$, we introduce a counter $v_j$ which is initialized to zero.
Each time a job $\ell$ is released during the execution of $j$ and it is dispatched to machine $i$, we increase $v_j$ by $w_\ell$.
Then, the \emph{rejection policy} of the algorithm $\mathcal{A}$ is the following: interrupt the execution of $j$ and reject it the first time when $v_j > w_j/\epsilon$.

Assume that at the arrival of a new job $j$ at time $r_j$, the machine $i$ is executing the job $k$.
For each $\ell \in U_i(t)\setminus\{k\}$, let $W_\ell = \sum_{\ell' \in U_i(t)\setminus\{k\}: \ell' \succeq \ell} w_{\ell'}$.
We denote by $\Delta_{ij}$ the marginal increase in the total weighted flow time that will occur following the scheduling and rejection policies of $\mathcal{A}$, if we decide to dispatch the job $j$ to machine $i$.
Then, $\Delta_{ij}$ can be bounded as follows (we ignore the increase of the speed and hence the decrease of the processing time for each job $\ell \prec j$)
\begin{equation*}
\Delta_{ij} \leq
\begin{cases}
\displaystyle
 w_j \left( \frac{q_{ik}(r_j)}{s_k} + \sum_{\ell \preceq j} \frac{p_{i\ell}}{\gamma W_\ell^{1/\alpha}} \right)
 + \Big(\sum_{\ell \succ j} w_\ell\Big) \frac{p_{ij}}{\gamma W_j^{1/\alpha}}\\
 \hfill \text{if } v_k + w_j \leq \frac{w_k}{\epsilon} \\
\displaystyle
 w_j \left( \sum_{\ell \preceq j} \frac{p_{i\ell}}{ \gamma W_\ell^{1/\alpha}} \right) 
 + \Big(\sum_{\ell \succ j} w_\ell\Big) \frac{p_{ij}}{\gamma W_j^{1/\alpha}}
 - \Big(\sum_{\ell \not= j} w_\ell\Big) \frac{q_{ik}(r_j)}{s_k} \\
 \hfill \text{otherwise}
\end{cases}
\end{equation*}
where in both cases, the first positive term correspond to the weighted flow time of the job $j$, while the second positive term correspond to the marginal increase of the weighted flow time of other jobs, that is the completion time of the jobs with density smaller than the density of $j$ is delayed by $p_{ij}/\gamma W_j^{1/\alpha}$.
The negative term in the second case corresponds to the decrease in the weighted flow time of all jobs in $U_i(t)$ if the job $k$ is rejected.
Then, we define a set of variables $\lambda_{ij}$, for all $i \in \mathcal{M}$, as:
$\lambda_{ij} 
 = w_j \left(\frac{p_{ij}}{\epsilon} + \sum_{\ell \preceq j} \frac{p_{i\ell}}{\gamma W_\ell^{1/\alpha}} \right) 
 + \Big(\sum_{\ell \succ j} w_\ell\Big) \frac{p_{ij}}{\gamma W_j^{1/\alpha}}$.
The \emph{dispatching policy} is the following: dispatch the job $j$ to the machine $i^*$ such that $i^* = \text{argmin}_{i \in \mathcal{M}} \{\lambda_{ij}\}$.
\bigskip

We next define the dual variables $\lambda_j$ as well as the quantities $u_{i}(t)$. 
Based on the dispatching policy, we set $\lambda_j = \frac{\epsilon}{1+\epsilon} \min_{i \in \mathcal{M}} \{\lambda_{ij}\}$.
For each job $j$, let $D_j$ be the set of the jobs rejected due to the rejection policy between $r_j$ and the time when $j$ is completed or rejected.
Let $j_k$ denote the job released at the time when our policy rejects the job $k$.
Then, we say that a job $j$ is \emph{definitively finished} at the time $\sum_{k \in D_j} \frac{q_{ik}(r_{j_k})}{s_k}$ after its completion or rejection.
For every job $\ell$, define the fractional weight $w_\ell(t)$ of $\ell$ at time $t$ as $w_\ell q_{i\ell}(t)/p_{i\ell}$. Let $Q_i(t)$ be the set of jobs that are dispatched to machine $i$ and are already completed or rejected but no yet definitively finished at time $t$.
Let $V_i(t) = \sum_{\ell \in U_i(t) \cup Q_i(t)} w_\ell(t)$ be the total \emph{fractional} weight of jobs that are not definitively finished on machine $i$ at time $t$.
We define $u_{i}(t)$ as follows:
$u_{i}(t) = \left(\frac{\epsilon}{\gamma(1+\epsilon)(\alpha-1)}\right)^{\frac{1}{\alpha-1}} V_i(t)^{1/\alpha}$.
Note that when a job is rejected, it is transferred from $U_i(t)$ to $Q_i(t)$ where it remains until the time it is definitively finished. 

Consider now two sets of jobs $I_1$ and $I_2$ assigned to machine $i$ such that they are identical except that there is only a job $j \in I_1 \setminus I_2$. Moreover, assume that no job is released after time $r_j$ in either of the instances. Then the algorithm $\mathcal{A}$ is said to be \emph{monotonic} iff $\sum\limits_{l\in I_2} w_l(t) \leq \sum\limits_{l \in I_1} w_l(t), \forall t$ where the jobs in $I_1$ and $I_2$ are scheduled according to $\mathcal{A}$.
The following lemma shows the monotonicity of $V_i(t)$.

\begin{lemma} \label{lemma:monotonicity}
$V_i(t)$ is monotone for every machine $i$.
\end{lemma}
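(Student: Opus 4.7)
The plan is to prove by induction on time that $V_i^{I_1}(t) - V_i^{I_2}(t) \geq 0$. Before time $r_j$ the two instances, and hence the schedules produced by $\mathcal{A}$, are identical, so the gap is $0$. At $t = r_j$, the job $j$ is released and dispatched only in $I_1$; this contributes $w_j$ to $V_i^{I_1}$ (since $q_{ij}(r_j) = p_{ij}$), while $V_i^{I_2}$ is unchanged. Hence at $t = r_j^+$ the gap equals $w_j \geq 0$. Since no job arrives in either instance after $r_j$, it remains to show that no subsequent event can drive the gap below zero.

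I would then proceed by case analysis on event types. A job completion occurs only when $w_\ell(t) = 0$, so transferring $\ell$ from $U_i$ to $Q_i$ causes no discontinuity in $V_i$. A rejection by the $v_k$-rule similarly moves $\ell$ from $U_i$ to $Q_i$ while preserving $w_\ell(t)$, so $V_i$ is continuous through rejections. Only the definitive finish of a previously rejected $\ell$ produces a discrete drop in $V_i$, by $w_\ell\,q_{i\ell}(\text{rejection time})/p_{i\ell}$. Between events both machines execute continuously at speeds fixed at start times, so $V_i^{I_k}$ decreases at rate $\delta_{\ell_k}\,s_{\ell_k}^{I_k}$ with $\ell_k$ the executing job. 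I would track, in each continuous phase, the pair of executing jobs and their speeds: since both schedules use highest-density-first and the pending set of $I_1$ differs from that of $I_2$ only by (possibly) containing $j$ or by temporal offsets induced by earlier divergent events, a finite case analysis controls the relative rate of decrease and shows that the cumulative excess weighted work processed in $I_1$ over $I_2$ never exceeds the pending weight tied up in $j$.

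The principal obstacle is handling asymmetric rejections. Dispatching $j$ to $i$ in $I_1$ increases the counter $v_k$ of the currently executing job $k$ by $w_j$, possibly pushing it past the threshold $w_k/\epsilon$ and causing $k$ to be rejected in $I_1$ while $k$ completes normally in $I_2$. In that case $I_1$ stops processing $k$ immediately but retains $w_k\,q_{ik}(r_j)/p_{ik}$ in $Q_i^{I_1}$ until the definitive-finish time $C_k + q_{ik}(r_j)/s_k$, whereas $I_2$ processes $k$ continuously over that same interval. The key step is a coupling argument showing that these two evolutions contribute identically to $V_i$: the stored fractional weight in $Q_i^{I_1}$ evaporates exactly when $I_2$ finishes processing $k$, so the discrete drop in $V_i^{I_1}$ at the definitive finish matches the continuous decrease of $V_i^{I_2}$ over $[r_j,\, C_k + q_{ik}(r_j)/s_k]$, keeping the inductive gap $\geq 0$ throughout. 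Verifying that this matching propagates across chains of subsequent rejection-triggering arrivals of $j$-descendants will be the technical heart of the argument.
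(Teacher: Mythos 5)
Your overall skeleton is the same as the paper's: couple the instance $I_1$ with the instance $I_2=I_1\setminus\{j\}$, induct over events, and observe that completions contribute zero fractional weight, that a rejection merely transfers a job from $U_i$ to $Q_i$ with its fractional weight frozen (so $V_i$ is continuous there), and that only definitive finishes cause discrete drops. That event taxonomy is correct and is exactly the role the sets $Q_i(t)$ play in the paper. However, the two steps you explicitly defer are the ones carrying all of the difficulty, and neither is closed. Even with no rejection at all, the $w_j$ surplus is under attack from the instant $j$ arrives: the speed in $I_1$ is $\gamma\bigl(\sum_{\ell\in U_i(t)}w_\ell\bigr)^{1/\alpha}$ over a pending set containing $j$, so $I_1$ runs strictly faster and destroys fractional weight at a higher rate, and $j$ may also reorder the highest-density-first queue. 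Showing that this never erodes more than $w_j$ of the gap is a genuine property of HDF under this speed rule; the paper does not reprove it but imports it as Lemma~6.1 of Anand and Garg. Your ``finite case analysis controls the relative rate of decrease'' is a placeholder for precisely that lemma.

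The larger gap is in the asymmetric-rejection case. Your coupling balances only $k$'s own contribution: the frozen weight $w_k q_{ik}(r_j)/p_{ik}$ held in $Q_i^{I_1}$ against the linearly decreasing weight of $k$ in $I_2$; that part is fine, since the frozen value dominates pointwise and both vanish at $r_j+q_{ik}(r_j)/s_k$. What it does not account for is that throughout this window the machine in $I_1$ is already processing the \emph{other} pending jobs while the machine in $I_2$ is still occupied with $k$. During the window $I_1$ therefore destroys fractional weight of jobs other than $k$ at rate $\delta_{i\ell}\,s^{I_1}$, which is not bounded by $\delta_{ik}s_k$, and the total destroyed over a window of length $q_{ik}(r_j)/s_k$ is not bounded by $w_j$: take $w_j$ tiny but just large enough to trip the counter $v_k$, $q_{ik}(r_j)$ large, and a short, heavy, high-density job already pending, and $I_1$ can finish that job well inside the window while $I_2$ has not touched it. So the inequality your whole induction rests on --- that the cumulative excess weighted work processed in $I_1$ over $I_2$ never exceeds the weight tied up in $j$ --- is asserted, not proven, and fails for this event type without some additional compensating mechanism. (This is also exactly the point where the paper's own one-line justification, that jobs linger in $Q_i(t)$ for an extra $q_{ik}(r_j)/s_k$ after completion, does the least work, since completed jobs carry zero fractional weight while they linger.) Identifying and proving the correct compensating term for this case is the missing content, not a routine verification.
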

\begin{proof}
Let $k$ be the job executing on machine $i$ at time $t$. Observe that $V_i(t)$ changes due to the arrival of a new job. Assume that a new job $j$ arrives at $t$, i.e. $t = r_j$. Then, it is sufficient to show that $V_i(t)$ is non-decreasing during anytime $t' \geq t$. Consider the jobs in $U_i(t) \setminus \{k\}$. Since all such jobs are scheduled in  non-increasing order of their densities, the total fractional weight of jobs in $U_i(t) \setminus \{k\}$ is monotonic with respect to arrival of a new job (refer to Lemma 6.1 in~\cite{AnandGarg12:Resource-augmentation}). 

Now we consider the case if $k$ is rejected or not rejected at time $t$. In the case $k$ is not rejected then for $t' < t + \frac{q_{ik}(r_j)}{s_j}$, the speed of the machine $i$ is a constant. Hence, $U_i(t')$ is a constant. Using Lemma 6.1 in~\cite{AnandGarg12:Resource-augmentation}, the lemma holds for this case. In the case $k$ is rejected then $U_i(t)$ decreases due to the removal of $k$. Since all jobs in $U_i(t) \setminus \{j\}$ remain for at least $\frac{q_{ik}(r_j)}{s_j}$ time in $Q_i(t)$ after their completion or rejection from $U_i(t)$, the total fractional weight of jobs in $U_i(t) \cup Q_i(t)$ is monotonic with respect to the rejection of job $k$. Using this property with Lemma 6.1 in~\cite{AnandGarg12:Resource-augmentation}, the lemma holds. 
\end{proof}

\paragraph{Analysis}

The following lemma guarantees that the definition of the dual variables lead always to a feasible solution for the dual program.

\begin{lemma} \label{lemma:feas-flow+energy}
For every $i \in \mathcal{M}$, $j \in \mathcal{J}$ and $t \geq r_j$, the dual constraint is feasible.
\end{lemma}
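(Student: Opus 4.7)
The plan is to adapt the three-case analysis used in the proof of Lemma~\ref{lem:dual} to the speed-scaling setting; the dual constraint's right-hand side now has three terms: $\delta_{ij}(t-r_j+p_{ij})$ (mirroring the flow-time terms), $\alpha u_i(t)^{\alpha-1}$ (mirroring $\beta_i(t)$), and a new speed-scaling correction $\frac{\alpha}{\gamma(\alpha-1)}w_j^{(\alpha-1)/\alpha}$.

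I would first invoke Lemma~\ref{lemma:monotonicity}: since $u_i(t)$ is an increasing function of $V_i(t)$ and the latter is monotone with respect to adding jobs to the instance, the RHS only grows with more arrivals after $r_j$. It thus suffices to verify feasibility assuming no job arrives after $r_j$. Under this assumption, every subsequently-executed pending job $\ell \in U_i(r_j)\setminus\{k\}$ runs at the constant speed $\gamma W_\ell^{1/\alpha}$, so $p_{i\ell}/(\gamma W_\ell^{1/\alpha})$ is exactly its occupancy time on the machine, aligning cleanly with the terms of $\lambda_{ij}$. Expanding $\lambda_j/p_{ij}\leq \frac{\epsilon}{1+\epsilon}\lambda_{ij}/p_{ij}$, the constant piece $w_j/(1+\epsilon)\leq w_j = \delta_{ij}p_{ij}$ is absorbed into the summand $\delta_{ij}(t-r_j+p_{ij})$.

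Two elementary inequalities then unify the remaining pieces of $\lambda_{ij}/p_{ij}$ into a sum of the form $\sum w_\ell/W_\ell^{1/\alpha}$: for $\ell \preceq j$, density ordering $\delta_{i\ell}\geq\delta_{ij}$ yields $w_jp_{i\ell}/p_{ij}\leq w_\ell$, and for $\ell \succ j$, $W_\ell \leq W_j$ yields $w_\ell/W_j^{1/\alpha}\leq w_\ell/W_\ell^{1/\alpha}$. The key algebraic step is a discrete Riemann-sum bound on $\int x^{-1/\alpha}dx$, giving $\sum_{\ell\succeq z}\frac{w_\ell}{W_\ell^{1/\alpha}}\leq\frac{\alpha}{\alpha-1}W_z^{(\alpha-1)/\alpha}$ for any $z$; the definition of $u_i(t)$ is calibrated so that $\frac{\epsilon}{(1+\epsilon)\gamma}\cdot\frac{\alpha}{\alpha-1}V_i(t)^{(\alpha-1)/\alpha}$ equals $\alpha u_i(t)^{\alpha-1}$. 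The three cases are then indexed by which job $z$ is being executed at $t$: if $z=k$, all of $U_i(r_j)\setminus\{k\}$ is still pending and $V_i(t)\geq W_{\mathrm{total}}$ closes the argument; if $z\preceq j$, the inequality $t-r_j\geq\sum_{\ell\prec z}p_{i\ell}/(\gamma W_\ell^{1/\alpha})$ absorbs the prefix into $\delta_{ij}(t-r_j)$ and the residual is bounded via $V_i(t)\geq W_z$; if $z\succ j$, the same prefix absorbs all $\ell\preceq j$ contributions and, via $w_\ell\leq\delta_{ij}p_{i\ell}$, also the already-completed jobs $j\prec\ell\prec z$.

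The main obstacle is this last case: the frozen sum $\sum_{\ell\succ j}w_\ell$ in $\lambda_{ij}$ may exceed $V_i(t)$, since $V_i(t)$ only captures currently-pending fractional weight together with the residual fractional weight of jobs rejected but not yet definitively finished. After absorbing the processed portion into $\delta_{ij}(t-r_j)$, the remaining tail is bounded by $W_z^{(\alpha-1)/\alpha}/\gamma$ using $W_z\leq W_j$; applying subadditivity of $x\mapsto x^{(\alpha-1)/\alpha}$ in the form $W_z^{(\alpha-1)/\alpha}\leq V_i(t)^{(\alpha-1)/\alpha}+w_z^{(\alpha-1)/\alpha}$ and controlling the $w_z^{(\alpha-1)/\alpha}$ piece by $w_j^{(\alpha-1)/\alpha}$ (using density ordering and the rejection threshold $v_z\leq w_z/\epsilon$) yields the desired $\frac{\alpha}{\gamma(\alpha-1)}w_j^{(\alpha-1)/\alpha}$ correction term. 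The ``definitively finished'' mechanism of $Q_i(t)$ is what keeps enough residual weight in $V_i(t)$ to make this reconciliation work in the presence of rejections.
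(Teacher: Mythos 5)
Your proposal follows essentially the same route as the paper's proof: reduce to the case of no arrivals after $r_j$ via the monotonicity of $V_i(t)$ (Lemma~\ref{lemma:monotonicity}), absorb the $w_j/(1+\epsilon)$ piece into $\delta_{ij}p_{ij}$, convert the $\lambda_{ij}$ terms into $\sum_\ell w_\ell/(\gamma W_\ell^{1/\alpha})$ using density ordering and $W_\ell \leq W_j$, bound that sum by the integral $\int^{V_i(t)+w_j} z^{-1/\alpha}\,dz$, and finish with subadditivity of $x^{(\alpha-1)/\alpha}$ and the calibration of $u_i(t)$. The case split by which job is executing at time $t$ is just a reorganization of the paper's split on $t$ versus $\bar{C}_j$, so no substantive difference.
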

\begin{proof}
Fix a machine $i$.
By Lemma~\ref{lemma:monotonicity}, $u_{i}(t)$'s do never decrease during the execution of the algorithm.
Hence, it is sufficient to prove the inequality for the job $j$ at time $r_j$.
Let $k$ be the job executed in machine $i$ at $r_j$ .
Moreover, let $\bar{C}_j$ be the completion time of the job $j$ estimated at time $r_j$ if it is assigned to machine $i$.
Specifically, if $k$ is rejected then $\bar{C}_j = r_j + \sum_{\ell \preceq j} \frac{p_{i\ell}}{\gamma W_\ell^{1/\alpha}}$; otherwise we have $\bar{C}_j = r_j + \frac{q_{ik}(r_j)}{s_k} + \sum_{\ell \preceq j} \frac{p_{i\ell}}{\gamma W_\ell^{1/\alpha}}$.

By the definitions of $\lambda_j$ and $\lambda_{ij}$, we have:
\begin{equation*}
\frac{\lambda_j}{p_{ij}} 
 \leq \frac{\epsilon}{1+\epsilon} \frac{\lambda_{ij}}{p_{ij}} 
 = \frac{\epsilon}{1+\epsilon} \left( \frac{w_j}{p_{ij}} \left(\frac{p_{ij}}{\epsilon} + \sum_{\ell \preceq j} \frac{p_{i\ell}}{\gamma W_\ell^{1/\alpha}} \right) + \Big(\sum_{\ell \succ j} w_\ell\Big) \frac{1}{\gamma W_j^{1/\alpha}} \right)
\end{equation*}

Let $w_n$ denote the the weight of the latest job according to the precedence order defined above. 

\noindent \textbf{Case 1: $t \leq \bar{C}_j$.}
Assume, first, that the job $k$ is running at time $t$.  
Hence, we have that
\begin{equation*}
t - r_j = \frac{q_{ik}(r_j) - q_{ik}(t)}{s_k}
\end{equation*}
and thus
\begin{align*}
\frac{\lambda_j}{p_{ij}}& - \delta_{ij}(t - r_j + p_{ij}) \\
& \leq \frac{\epsilon}{1+\epsilon} \left( \frac{w_j}{p_{ij}} \left(\frac{p_{ij}}{\epsilon} + \sum_{\ell \preceq j} \frac{p_{i\ell}}{\gamma W_\ell^{1/\alpha}} \right) + \Big(\sum_{\ell \succ j} w_\ell\Big) \frac{1}{\gamma W_j^{1/\alpha}} \right) \\
& \quad - \frac{w_{j}}{p_{ij}} \left( \frac{q_{ik}(r_j) - q_{ik}(t)}{s_k} + p_{ij} \right) \\
& \leq \frac{\epsilon}{1+\epsilon} \left( \frac{w_j}{p_{ij}} \sum_{\ell \preceq j} \frac{p_{i\ell}}{\gamma W_\ell^{1/\alpha}} + \Big(\sum_{\ell \succ j} w_\ell\Big) \frac{1}{\gamma W_j^{1/\alpha}} \right) \\
& \quad - \frac{w_{j}}{p_{ij}} \cdot \frac{q_{ik}(r_j) - q_{ik}(t)}{s_k} \\
& \leq \frac{\epsilon}{1+\epsilon} \left( \frac{w_j}{p_{ij}} \sum_{\ell \preceq j} \frac{p_{i\ell}}{\gamma W_\ell^{1/\alpha}} + \Big(\sum_{\ell \succ j} w_\ell\Big) \frac{1}{\gamma W_j^{1/\alpha}} \right) \\
& \text{(since } t \geq r_j \text{ and hence } q_{ik}(r_j) - q_{ik}(t) \geq 0 \text{)}\\
& \leq \frac{\epsilon}{1+\epsilon} \left( \sum_{\ell \preceq j} \frac{w_\ell}{\gamma W_\ell^{1/\alpha}} + \Big(\sum_{\ell \succ j} w_\ell\Big) \frac{1}{\gamma W_j^{1/\alpha}} \right) \\
& \text{(since } \frac{w_\ell}{p_{i\ell}} \geq \frac{w_j}{p_{ij}} \text{ for any } \ell \preceq j {)}\\
& \leq \frac{\epsilon}{1+\epsilon} \left( \sum_{\ell \preceq j} \frac{w_\ell}{\gamma W_\ell^{1/\alpha}} + \sum_{\ell \succ j} \frac{w_\ell}{\gamma W_\ell^{1/\alpha}} \right) \\
& = \frac{\epsilon}{1+\epsilon} \sum_{\ell \not= k} \frac{w_\ell}{\gamma W_\ell^{1/\alpha}} \\
& \leq \frac{\epsilon}{1+\epsilon} \int_{w_n}^{V_i(t)+w_j} \frac{dz}{\gamma z^{1/\alpha}} \\
& = \frac{\epsilon}{1+\epsilon} \cdot \frac{\alpha}{\gamma(\alpha-1)} \big(V_i(t) + w_j\big)^{\frac{\alpha-1}{\alpha}} \\
& \leq \frac{\epsilon}{1+\epsilon} \cdot \frac{\alpha}{\gamma(\alpha-1)} \left(V_i(t)^{\frac{\alpha-1}{\alpha}} + w_j^{\frac{\alpha-1}{\alpha}}\right) \\
& = \frac{\epsilon}{1+\epsilon} \cdot \frac{\alpha}{\gamma(\alpha-1)} \left(\frac{\gamma(1+\epsilon)(\alpha-1)}{\epsilon} \big(u_{i}(t)\big)^{\alpha-1} + w_j^{\frac{\alpha-1}{\alpha}} \right) \\
& = \alpha \big(u_{i}(t)\big)^{\alpha-1} + \frac{\epsilon}{1+\epsilon} \cdot \frac{\alpha}{\gamma(\alpha-1)} w_j^{\frac{\alpha-1}{\alpha}} \\
& \leq \alpha \big(u_{i}(t)\big)^{\alpha-1} + \frac{\alpha}{\gamma(\alpha-1)} w_j^{\frac{\alpha-1}{\alpha}}
\end{align*}

Assume now that a job $h \not= k$ is executing at time $t$.
Therefore, the machine $i$ has processed all the jobs which have density higher than $\delta_{ih}$.
Moreover, the job $k$ is either completed or rejected.
Hence, we have that
\begin{equation*}
t - r_j \geq \sum_{\ell \prec h} \frac{p_{i\ell}}{\gamma{W_\ell}^{1/\alpha}} + \frac{p_{ih} - q_{ih(t)}}{\gamma{W_h}^{1/\alpha}}
\end{equation*}
and thus
\begin{align*}
\frac{\lambda_j}{p_{ij}}& - \delta_{ij} (t - r_j + p_{ij}) \\
& \leq \frac{\epsilon}{1+\epsilon} \left( \frac{w_j}{p_{ij}} \left(\frac{p_{ij}}{\epsilon} + \sum_{\ell \preceq j} \frac{p_{i\ell}}{\gamma W_\ell^{1/\alpha}} \right) + \Big(\sum_{\ell \succ j} w_\ell\Big) \frac{1}{\gamma W_j^{1/\alpha}} \right) \\
& \quad - \frac{w_{j}}{p_{ij}} \left( \sum_{\ell \prec h} \frac{p_{i\ell}}{\gamma{W_\ell}^{1/\alpha}} + \frac{p_{ih} - q_{ih(t)}}{\gamma{W_h}^{1/\alpha}} + p_{ij} \right) \\
& = \frac{\epsilon}{1+ \epsilon} \left( \frac{w_j}{p_{ij}} \sum_{h \preceq \ell \preceq j} \frac{p_{i\ell}}{\gamma W_\ell^{1/\alpha}} + \Big(\sum_{\ell \succ j} w_\ell\Big) \frac{1}{\gamma W_j^{1/\alpha}} \right) \\
& \quad - \frac{w_j}{p_{ij}} \left( \frac{1}{1+\epsilon} \cdot \sum_{\ell \prec h} \frac{p_{i\ell}}{\gamma W_\ell^{1/\alpha}} - \frac{\epsilon}{1+\epsilon} p_{ij} - \frac{p_{ih} - q_{ih(t)}}{\gamma{W_h}^{1/\alpha}} \right) \\
& \leq \frac{\epsilon}{1+\epsilon} \left( \sum_{h \preceq \ell \preceq j} \frac{w_\ell}{\gamma W_\ell^{1/\alpha}} + \sum_{\ell \succ j} \frac{w_\ell}{\gamma W_\ell^{1/\alpha}} \right) \\
& = \frac{\epsilon}{1+\epsilon} \sum_{\ell \succeq h} \frac{w_\ell}{\gamma W_\ell^{1/\alpha}} \quad
\leq \frac{\epsilon}{1+\epsilon} \int_{w_n}^{V_i(t)+w_j} \frac{d_z}{\gamma z^{1/\alpha}} \\
& \leq \alpha \big(u_{i}(t)\big)^{\alpha-1} + \frac{\alpha}{\gamma(\alpha-1)} w_j^{\frac{\alpha-1}{\alpha}}
\end{align*}

\noindent \textbf{Case 2: $t > \bar{C}_j$.} 
Let $h$ be the job executing at time $t$. Thus, the machine $i$ has processed all the jobs which have density higher than $\delta_{ih}$. Hence, we have 

\begin{align*}
t - r_j &\geq \sum\limits_{\ell \prec h} \frac{p_{i \ell}}{\gamma W_\ell^{1/\alpha}} + \frac{p_{ih} - q_{ih}(t)}{\gamma W_h^{1/\alpha}}
\end{align*}

Thus 
\begin{align*}
\frac{\lambda_{j}}{p_{ij}}& - \delta_{ij}(t-r_j + p_{ij}) \\
&\leq \frac{\epsilon}{1+\epsilon} \left( \frac{w_j}{p_{ij}} \left(\frac{p_{ij}}{\epsilon} + \sum_{\ell \preceq j} \frac{p_{i\ell}}{\gamma W_\ell^{1/\alpha}} \right) + \Big(\sum_{\ell \succ j} w_\ell\Big) \frac{1}{\gamma W_j^{1/\alpha}} \right) \\
& \quad - \frac{w_{j}}{p_{ij}} \left( \sum_{\ell \prec h} \frac{p_{i\ell}}{\gamma{W_\ell}^{1/\alpha}} + \frac{p_{ih} - q_{ih(t)}}{\gamma{W_h}^{1/\alpha}} + p_{ij} \right) \\
& = \frac{\epsilon}{1+ \epsilon} \left( \frac{w_j}{p_{ij}} \sum_{\ell \preceq j} \frac{p_{i\ell}}{\gamma W_\ell^{1/\alpha}} + \Big(\sum_{\ell \succ j} w_\ell\Big) \frac{1}{\gamma W_j^{1/\alpha}} \right) \\
& \quad - \frac{w_j}{p_{ij}} \left( \frac{1}{1+\epsilon} \cdot \sum_{\ell \prec h} \frac{p_{i\ell}}{\gamma W_\ell^{1/\alpha}} - \frac{\epsilon}{1+\epsilon} p_{ij} - \frac{p_{ih} - q_{ih(t)}}{\gamma{W_h}^{1/\alpha}} \right) \\
& \leq \frac{\epsilon}{1+\epsilon} \left( \sum_{\ell \succ h} \frac{w_\ell}{\gamma W_\ell^{1/\alpha}} \right) \\
& = \frac{\epsilon}{1+\epsilon} \sum_{\ell \succeq h} \frac{w_\ell}{\gamma W_\ell^{1/\alpha}} \quad
\leq \frac{\epsilon}{1+\epsilon} \int_{w_n}^{V_i(t)+w_j} \frac{d_z}{\gamma z^{1/\alpha}} \\
& \leq \alpha \big(u_{i}(t)\big)^{\alpha-1} + \frac{\alpha}{\gamma(\alpha-1)} w_j^{\frac{\alpha-1}{\alpha}}
\end{align*}
\end{proof}

Based on this lemma we can prove Theorem~\ref{thm:flow+energy}.

\begin{proof}[Proof of Theorem~\ref{thm:flow+energy}]
By Lemma~\ref{lemma:feas-flow+energy}, the proposed dual variables constitute a feasible solution for the dual program.
Since each job $j \in \mathcal{J}$ is charged to at most one other job while a job $k$ is rejected the first time where $v_k > \frac{w_k}{\epsilon}$, the algorithm $\mathcal{A}$ rejects jobs of total weight at most $\epsilon \sum_{j \in \mathcal{J}} w_j$.
Hence, it remains to give a lower bound for the dual objective based on the proposed dual variables.

Let $\mathcal{R}$ be the set of rejected jobs.
We denote by $F_j^{\mathcal{A}}$ the flow-time of a job $j\in \mathcal{J}\setminus\mathcal{R}$ in the schedule of $\mathcal{A}$.
By slightly abusing the notation, for a job $k \in \mathcal{R}$, we will also use $F_k^{\mathcal{A}}$ to denote the total time passed after $r_k$ until deciding to reject a job $k$, that is, if $k$ is rejected at the release of the job $j \in \mathcal{J}$ then $F_k^{\mathcal{A}}=r_j-r_k$.
Denote by $j_{k}$ the job released at the moment we decided to reject $k$, i.e., for the counter $v_{k}$ before the arrival of job $j_{k}$ we have that $w_{k}/\epsilon - w_{j_{k}}< v_{k} < w_{k}/\epsilon$.

Let $\Delta_j$ be the total increase in the flow-time caused by the arrival of the job $j \in \mathcal{J}$,
i.e., $\Delta_j = \Delta_{ij}$, where $i \in \mathcal{M}$ is the machine to which $j$ is dispatched by $\mathcal{A}$.
For the objective function of the dual program we have

\begin{align*}
\sum_{j \in \mathcal{J}} & \lambda_j  + \sum_{i \in \mathcal{M}} \int_{0}^{\infty} (1-\alpha) \big(u_{i}(t)\big)^{\alpha} dt \\
& \geq \frac{\epsilon}{1+\epsilon} \left( \sum_{j \in \mathcal{J}} \Delta_j + \sum_{k \in \mathcal{R}} \left( \frac{q_{ik}(r_{j_k})}{s_k} \sum_{\ell \not= j_k} w_{\ell} \right) \right) \\
& \quad - (\alpha-1) \left(\frac{\epsilon}{\gamma(1+\epsilon)(\alpha-1)} \right)^{\frac{\alpha} {\alpha-1}} V_i(t) \\
&\geq \left( \frac{\epsilon}{1+\epsilon} - (\alpha-1) \left(\frac{\epsilon}{\gamma(1+\epsilon)(\alpha-1)} \right)^{\frac{\alpha} {\alpha-1}} \right) F^*
\end{align*}

The total weighted flow time plus energy is

$$2F^* +  \left(\frac{\alpha}{\gamma(\alpha-1)}\right)F^* + \sum_i \int_0^\infty (s_i(t))^{\alpha} dt
\leq
\left(2 + \left(\frac{\alpha}{\gamma(\alpha-1)}\right) + \gamma^{\alpha}\right) F^*$$.

Hence the competitive ratio is:
\begin{align*}
    & \frac{\left(2 + \left(\frac{\alpha}{\gamma(\alpha-1)}\right) + \gamma^{\alpha}\right)}{\left( \frac{\epsilon}{1+\epsilon} \right) -  \left(\frac{\epsilon}{\gamma(1+\epsilon)}\right)^{\frac{\alpha}{\alpha-1}} (\alpha-1)^{\frac{-1}{\alpha-1}}}
\end{align*}

We choose $\gamma = \left(\frac{\epsilon}{1+\epsilon}\right)^{\frac{1}{\alpha-1}} \frac{1}{\alpha-1}(\alpha-1 + \ln(\alpha-1))^{\frac{\alpha-1}{\alpha}}$. Observe that denominator becomes $\frac{\epsilon}{1+\epsilon}(\frac{\ln(\alpha-1)}{\alpha-1+\ln(\alpha-1)})$ and the numerator becomes $2 + 2 \left(\frac{1+ \epsilon}{\epsilon}\right)^{\frac{1}{\alpha-1}}+\left(\frac{\epsilon}{1+\epsilon}\right)^2$. Hence the competitive ratio is at most 
O$\left(\left(1+ \frac{1}{\epsilon}\right)^{\frac{\alpha}{\alpha-1}}\right)$.
\end{proof}
\section{Minimize Total Energy Consumption}
\label{sec:energy}

\paragraph{Formulation.}
In the problem, we consider the sets of discretized speeds $\mathcal{V}$ and times. 
We can do that and loose only a factor $(1+\epsilon)$ for $\epsilon$ arbitrarily small.
In the non-preemptive model, the execution of a job is specified by three 
parameters: (1) a machine in which it is executed; (2) a starting time; 
and (3) a speed which is constant during its execution. Note that the
parameters imply the completion time of job. A \emph{valid} execution
of a job $j$ must have the starting time and completion time in $[r_j,d_j]$. 
We say that a \emph{strategy} of a job is a specification of a valid 
execution of the job. Formally, a strategy $s_{i,j,k}$ of a job 
$j$ in machine $i$ indicates the starting time of the job and its speed during the execution. 
Let $\mathcal{S}_j$ be a set of strategies of job $j$.
As the sets of speeds and times are finite, so is the set of 
strategies $\mathcal{S}_j$ for every job $j$.
Let $x_{i,j,k}$ be a variable indicating whether job $j$ is executed by 
strategy $s_{i,j,k} \in \mathcal{S}_{j}$.  
We say that $A$ is a \emph{configuration} in machine $i$ if $A$ is a 
feasible schedule of a subset of jobs. Specifically, $A$ consists of 
tuples $(i,j,k)$ meaning that job $j$ is executed in machine $i$ following 
the strategy $s_{i,j,k}$. 
For configuration $A$ and machine $i$, let $z_{i,A}$ be a variable 
such that $z_{i,A} = 1$ if and only if 
for every triple $(i,j,k) \in A$, $x_{i,j,k} = 1$ and. 
In other words, $z_{i,A} = 1$ iff the schedule in machine $i$ is exactly $A$. 
The energy cost of a configuration $A$ of machine $i$ is $f_{i}(A) = \sum_{t} P_{i}(A(t))$
where $A(t)$ is the speed of the corresponding schedule at time $t$. 
We consider the following formulation and the dual of its relaxation.

\begin{align*}
\min  \sum_{i,A} f_{i}(A)& z_{i,A} \\
\sum_{i,k:s_{i,j,k} \in \mathcal{S}_{j}} x_{i,j,k} &= 1 & &  \forall j \\
\sum_{A: (i,j,k) \in A} z_{i,A}  &= x_{i,j,k}	& & \forall i, j, k \\
\sum_{A} z_{i,A} &= 1 & & \forall i \\
x_{i,j,k}, z_{i,A} &\in \{0,1\} & & \forall i,j,k,A \\
\end{align*}
\begin{align*}
\max \sum_{j} \delta_{j} &+ \sum_{i} \gamma_{i} \\
\delta_{j} &\leq \beta_{i,j,k}  & &  \forall i,j,k \\
\gamma_{i} + \sum_{(i,j,k) \in A} \beta_{i,j,k} &\leq f_{i}(A)  & & \forall i,A \\
\end{align*}

In the primal, the first constraint guarantees that a job $j$ has to be processed 
by some valid execution (in some machine). 
The second constraint ensures that if job $j$ follows strategy $s_{i,j,k}$ then
in the solution, the schedule (configuration) on machine $i$ must contain the 
execution corresponding to strategy $s_{i,j,k}$.
The third constraint says that in the solution, there is always a configuration
(schedule) associated to machine $i$. 

\paragraph{Algorithm.} 
We first interpret intuitively the dual variables, dual constraints and derive useful observations for a
competitive algorithm. Variable $\delta_{j}$ represents the increase of energy to the arrival of job $j$.
Variable $\beta_{i,j,k}$ stands for the marginal energy if job $j$ follows strategy $s_{i,j,k}$.
By this interpretation, the first dual constraint clearly indicates the greedy behavior of an algorithm. 
That is, if a new job $j$ is released, select a strategy $s_{i,j,k} \in \mathcal{S}_{j}$ that minimizes the marginal increase
of the total energy. 

Let $A^{*}_{i}$ be the set of current schedule of machine $i$. Initially, $A^{*}_{i} \gets \emptyset$ for every $i$. 
At the arrival of job $j$, select a strategy $s_{i,j,k} \in \mathcal{S}_{j}$ that minimizes
$
\bigl[ f_{i}(A^{*}_{i} \cup s_{i,j,k}) - f_{i}(A^{*}_{i}) \bigr]  
$
where $(A^{*}_{i} \cup s_{i,j,k})$ is the current schedule with additional execution of 
job $j$ which follows strategy $s_{i,j,k}$. Let $s_{i^{*},j,k^{*}}$ be an optimal strategy.
Then assign job $j$ to machine $i^{*}$ and process it according to the corresponding 
execution of $s_{i^{*},j,k^{*}}$. In the algorithm, we never interrupt or modify the speed of 
a job. 

In fact, we can implement this algorithm as follows.
Let $u_{it}$ be the speed of machine $i$ at time $t$.
Initially, set $u_{it} \gets 0$ for every machine $i$ and time $t$. 
At the arrival of a job $j$, compute the minimum energy increase if job $j$ is assigned to machine $i$
and is executed with constant speed during its execution. Specifically,
it is an optimization problem 
\begin{align*}	
\min_{i} \min_{\tau,v} \sum_{t = \tau}^{\tau + p_{ij}/v} \biggl[ f_{i}\bigl(u_{it} + v\bigr) - f_{i}\bigl(u_{it}\bigr) \biggr]\\
\quad 
\text{s.t}
\quad
r_{j} \leq \tau \leq \tau + \frac{p_{ij}}{v} \leq d_{j}, \quad v \in \mathcal{V}
\end{align*}

\paragraph{Dual variables.} 
Assume that all energy power functions $f_{i}$ are $(\lambda,\mu)$-smooth for some fixed parameters $\lambda > 0$ and $\mu < 1$. 
We are now constructing a dual feasible solution. Define $\delta_{j}$ as $1/\lambda$ times the 
the increase of the total cost due to the arrival of job $j$.
For each machine $i$ and job $j$, define 
$
\beta_{i,j,k} := \frac{1}{\lambda} \bigl[ f_{i}(A^{*}_{i,\prec j} \cup s_{i,j,k}) - f_{i}(A^{*}_{i,\prec j})  \bigr]
$
where $A^{*}_{i, \prec j}$ is the schedule of machine $i$ (due to the algorithm) prior to the arrival of job $j$.
Finally, for every machine $i$ define dual variable 
$
\gamma_{i} := - \frac{\mu}{\lambda} f_{i}(A^{*}_{i})
$
where $A^{*}_{i}$ is the schedule of machine $i$ (at the end of the instance).

\begin{lemma}
The defined variables form a dual feasible solution. 
\end{lemma}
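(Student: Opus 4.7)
The plan is to verify the two dual constraints separately. The first one, $\delta_j \leq \beta_{i,j,k}$, is essentially immediate from the greedy rule: by construction $\delta_j$ equals $1/\lambda$ times the minimum over all $(i,k)$ of $\bigl[f_i(A^*_{i,\prec j}\cup s_{i,j,k}) - f_i(A^*_{i,\prec j})\bigr]$, while $\beta_{i,j,k}$ is exactly $1/\lambda$ times the same quantity for one specific $(i,k)$. Hence $\delta_j \leq \beta_{i,j,k}$ for every choice of $(i,j,k)$.

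The substance is in the second constraint $\gamma_i + \sum_{(i,j,k)\in A}\beta_{i,j,k} \leq f_i(A)$. After plugging in the definitions and multiplying by $\lambda$, this reduces to
\begin{equation*}
\sum_{(i,j,k)\in A}\bigl[f_i(A^*_{i,\prec j}\cup s_{i,j,k}) - f_i(A^*_{i,\prec j})\bigr]
\;\leq\; \lambda f_i(A) + \mu f_i(A^*_i),
\end{equation*}
which is exactly what $(\lambda,\mu)$-smoothness will give us, once we identify the right chain of sets. I would fix the machine $i$ and the configuration $A$, list the tuples in $A$ in increasing order of arrival time of the jobs they contain, say $(i,j_1,k_1),\dots,(i,j_n,k_n)$, and set $a_\ell := s_{i,j_\ell,k_\ell}$, $B_\ell := A^*_{i,\prec j_\ell}$, and $B := A^*_i$ (the final schedule on machine $i$).

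The key structural observation is that the algorithm only appends executions to $A^*_i$ and never modifies or removes them, so processing jobs in arrival order yields the chain $B_1 \subseteq B_2 \subseteq \cdots \subseteq B_n \subseteq B$ required by the smoothness definition. Applying $(\lambda,\mu)$-smoothness of $f_i$ to this chain with ground set consisting of all possible strategies gives
\begin{equation*}
\sum_{\ell=1}^n \bigl[f_i(B_\ell \cup a_\ell) - f_i(B_\ell)\bigr] \;\leq\; \lambda f_i(A) + \mu f_i(B),
\end{equation*}
which is precisely the desired inequality after dividing by $\lambda$ and moving the $-\frac{\mu}{\lambda}f_i(A^*_i) = \gamma_i$ term to the left.

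The only subtle point, and the step most worth double-checking, is that the chain property really holds: one must ensure that the jobs appearing in an arbitrary configuration $A$ are actually a subset of those that have arrived by the end, so that the $B_\ell$ are well-defined and nested, and that the strategy $a_\ell$ assigned by $A$ need not coincide with the one picked by the algorithm (this is fine, since smoothness allows arbitrary $a_\ell$). Everything else is just bookkeeping, so the proof reduces, after setting up notation, to a single invocation of the smoothness definition.
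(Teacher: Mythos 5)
Your proof is correct and follows essentially the same route as the paper: the first constraint from the greedy choice of strategy, and the second by applying $(\lambda,\mu)$-smoothness along the chain $A^*_{i,\prec j_1}\subseteq\cdots\subseteq A^*_{i,\prec j_n}\subseteq A^*_i$, which is nested precisely because the algorithm only appends executions. The only cosmetic difference is that the paper invokes smoothness of the power function $P_i$ pointwise at each time $t$ and then sums over $t$, whereas you invoke smoothness of $f_i$ directly; these coincide since $f_i(\cdot)=\sum_t P_i(\cdot(t))$ and smoothness is preserved under such sums.
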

\begin{proof}
The first dual constraint follows immediately the definitions of $\delta_{j}, \beta_{i,j,k}$
and the decision of the algorithm. 
Specifically, the right-hand side of the constraint represents $1/\lambda$ times 
the increase of energy if a job $j$ follows a strategy $s_{i,j,k}$. This is larger than 
$1/\lambda$ times the minimum increase of energy optimized over all strategies in $\mathcal{S}_{j}$, 
which is $\delta_{j}$. 

We now show that the second constraint holds. Fix a machine $i$ and an arbitrary configuration $A$ on machine $i$.
The corresponding constraint reads
\begin{align}	\label{eq:energy-nonpreemptive-smooth}
& - \frac{\mu}{\lambda} f_{i}(A^{*}_{i}) + \frac{1}{\lambda} \sum_{(i,j,k) \in A} 
							\biggl[ f_{i}(A^{*}_{i,\prec j} \cup s_{i,j,k}) - f_{i}(A^{*}_{i,\prec j}) \biggr]
\leq f_{i}(A) \Leftrightarrow	\notag \\
& \sum_{(i,j,k) \in A} \biggl[ f_{i}(A^{*}_{i,\prec j} \cup s_{i,j,k}) - f_{i}(A^{*}_{i,\prec j}) \biggr]
\leq \lambda f_{i}(A) + \mu f_{i}(A^{*}_{i})
\end{align}
We argue that this inequality follows the $(\lambda,\mu)$-smoothness of energy power functions.
We slightly abuse notation by defining $A^{*}_{i,\prec j}(t)$ as the speed of machine $i$ (due to the 
algorithm) at time $t$ before the arrival of job $j$ and $s_{i,j,k}(t)$ be the speed at time $t$ of job $j$
if it follows the strategy $s_{i,j,k}$. Observe that $A^{*}_{i,\prec j}(t)$ is the sum of speeds (according to the algorithm) 
at time $t$ of jobs assigned to machine $i$ prior to job $j$.
For any time $t$, as the power $P_{i}$ is $(\lambda,\mu)$-smooth,
we have
\begin{align*}
 \sum_{(i,j,k) \in A} \biggl[ P_{i} \bigl(A^{*}_{i,\prec j}(t) + s_{i,j,k}(t) \bigr) - P_{i}\bigl( A^{*}_{i,\prec j}(t) \bigr) \biggr]\\
\leq \lambda P_{i}\biggl( \sum_{(i,j,k) \in A} s_{i,j,k}(t) \biggr) + \mu P_{i}\bigl(A^{*}_{i}(t)\bigr)
\end{align*}
Summing over all times $t$, Inequality~(\ref{eq:energy-nonpreemptive-smooth}) holds.
Therefore, the lemma follows.
\end{proof}

We are now ready to prove the Theorem~\ref{thm:energy}.

\begin{proof}[Proof of Theorem~\ref{thm:energy}]
By the definition of the dual variables, the dual objective is 
\begin{align*}
\sum_{j} \delta_{j} + \sum_{i} \gamma_{i} = 
\sum_{i} \frac{1}{\lambda} f_{i}(A^{*}_{i})  - \sum_{i} \frac{\mu}{\lambda}  f_{i}(A^{*}_{i})
= \frac{1-\mu}{\lambda} \sum_{i}  f_{i}(A^{*}_{i})
\end{align*}
Besides, the cost of the solution due to the algorithm is $\sum_{i} f_{i}(A^{*}_{i})$.
Hence, the competitive ratio is at most $\lambda/(1-\mu)$. 
In particular, the power functions of the form $P_{i}(s) = s^{\alpha_{i}}$, $\alpha_{i} > 1$,
are $O\bigl(\alpha^{\alpha - 1},\frac{\alpha - 1}{\alpha}\bigr)$-smooth where $\alpha = \max_{i} \alpha_{i}$. Specifically, by the smooth inequalities in \cite{CohenDurr12:Smooth-inequalities}, for any sequences of non-negative real numbers 
$\{a_{1}, a_{2}, \ldots, a_{n}\}$ and $\{b_{1}, b_{2}, \ldots, b_{n}\}$
and for any $\alpha \geq 1$, 
it holds that
\begin{align*}
\sum_{i=1}^{n} \left[ \biggl( b_{i}  + \sum_{j=1}^{i} a_{j} \biggr)^\alpha - \biggl( \sum_{j=1}^{i} a_{j} \biggr)^\alpha \right]
\leq \lambda(\alpha) \cdot \biggl( \sum_{i=1}^{n} b_{i}  \biggr)^\alpha + 
	\mu(\alpha) \cdot \biggl( \sum_{i=1}^{n} a_{i}  \biggr)^\alpha 
\end{align*}
where $\mu(\alpha) = \frac{\alpha-1}{\alpha}$ and $\lambda(\alpha) = \Theta\left(\alpha^{\alpha-1}\right)$.  

That implies the competitive ratio $O\bigl(\alpha^{\alpha}\bigr)$.
\end{proof}
\section{Conclusions}
\label{sec:conclusions}

This paper considered designing online non-preemptive  schedulers
--- a domain which has long resisted  algorithms with strong worst case guarantees . 
The paper gave provably competitive algorithms in the rejection model.  This shows how relaxed models can give rise to good algorithms for the non-preemptive setting. 
It is of significant interest to develop other realistic relaxations of worst case models (like rejection or resource augmentation) that give rise to strong algorithms for non-preemptive settings.

\bibliographystyle{unsrt}
\bibliography{references}




\end{document}